\documentclass[12pt, a4paper, leqno]{amsart}
\usepackage{esint} 

\usepackage{amsthm}
\usepackage{amsmath}
\usepackage{amsfonts}
\usepackage{amssymb}
\usepackage{amscd}
\usepackage{mathrsfs}
\usepackage{enumerate}
\usepackage{bm}
\usepackage{dsfont}
\usepackage{xcolor}

\usepackage[utf8]{inputenc}
\usepackage[english]{babel}
\usepackage{anysize}

\usepackage{lipsum}

\newcommand{\R}{\mathbb{R}}
\newcommand{\Op}{\operatorname{Op}}
\newcommand{\Ad}{\operatorname{Ad}}

\newcommand{\supp}{\operatorname{supp}}

\newtheorem{prop}{Proposition}
\newtheorem{lemma}{Lemma}

\newtheorem{definition}{Definition}

\newtheorem{teor}{Theorem}
\theoremstyle{remark}

\theoremstyle{remark}

\theoremstyle{remark}
\newtheorem{remark}{Remark}

\title[Semiclassical asymptotics for nonselfadjoint Harmonic oscillators]{Semiclassical asymptotics for nonselfadjoint Harmonic oscillators}
\author[Víctor Arnaiz]{Víctor Arnaiz}
 
\address{Instituto de Ciencias Matemáticas (ICMAT), C/ Nicolás Cabrera, nº 13-15, Campus de Cantoblanco, UAM. 28049 Madrid, España}

\email{victor.arnaiz@icmat.es}

\author[Gabriel Rivi\`ere]{Gabriel Rivi\`ere}

\address{Laboratoire de math\'ematiques Jean Leray (U.M.R. CNRS 6629), Universit\'e de Nantes, 2 rue de la Houssini\`ere, BP92208, 
44322 Nantes Cedex 3, France}

\email{gabriel.riviere@univ-nantes.fr}

\begin{document}

\begin{abstract}
We consider nonselfadjoint perturbations of semiclassical harmonic oscillators. Under appropriate dynamical assumptions, we establish some 
spectral estimates such as upper bounds on the resolvent near the real axis when no geometric control condition is satisfied. 
\end{abstract}


\maketitle

\section{Introduction}

Motivated by earlier works of Lebeau on the asymptotic properties of the damped wave equation~\cite{Lebeau96}, Sj\"ostrand initiated in~\cite{Sjostrand00} the spectral study of this partial differential equation 
on compact Riemannian manifolds. He proved that eigenfrequencies verify a Weyl asymptotics in the high frequency limit~\cite[Th.~0.1]{Sjostrand00} -- see also~\cite{Markus88,Markus_Matsaev80} for earlier related contributions of Markus and Matsaev. Moreover, he showed that eigenfrequencies lie in a strip of the complex plane which can be completely determined in terms of the average of the damping function along the geodesic flow~\cite[Th.~0.0 and 0.2]{Sjostrand00} -- see also~\cite{Lebeau96, RauchTaylor}. Following~\cite{Sjostrand00}, showing these results turns out to be the particular case of a more systematic study of a nonselfadjoint semiclassical problem which has since then been the object of several works. More precisely, it was investigated how these generalized eigenvalues are asymptotically distributed inside the strip determined by Sj\"ostrand and how the dynamics of the underlying classical Hamiltonian influences this asymptotic distribution. Mostly two questions have been considered in the literature. First, one can ask about the precise distribution of eigenvalues inside the strip and this question was addressed both in the completely integrable framework~\cite{Hitrik02, HitrikSjostrand04, HitrikSjostrand05, HitrikSjostrandVuNgoc07, HitrikSjostrand08, HitrikSjostrand08b, HitrikSjostrand12, HitrikSjostrand18} and in the chaotic one~\cite{Anantharaman10}. Second, it is natural to focus on how eigenfrequencies can accumulate at the boundary of the strip and also 
to get resolvent estimates near the boundary of the strip. Again, this question has been explored both in the integrable case~\cite{AschLebeau, HitrikSjostrand04, BurqHitrik, AnantharamanLeautaud, BurqGerard18} and in the chaotic one~\cite{Christianson07, Schenck10, Nonnenmacher11, ChristiansonSchenckVasyWunsch, Riviere14, Jin17}.

The purpose of this work is to consider the second question for simple models of completely integrable systems. Via 
these models, we aim at illustrating the influence of the subprincipal symbol of the selfajoint part of our semiclassical operators on the asymptotic distribution of eigenvalues but also on resolvent estimates near the real axis. As briefly reminded below, this is related to the decay of the corresponding semigroup~\cite{Lebeau96}. Among other things, our study is motivated by earlier results due to Asch and Lebeau~\cite[Th.~2.3]{AschLebeau}. In that reference, they indeed showed how a selfadjoint pertubation of the principal symbol of the damped wave operator on the $2$-sphere can create a spectral gap inside the spectrum in the high frequency limit. Theorem~\ref{t:main_result_3} below shows how this result can be extended to our context\footnote{Observe that, compared with~\cite{AschLebeau}, our operators are not necessarily associated with a periodic flow.}. A major ingredient in the proof of~\cite{AschLebeau} but also in the works of Hitrik-Sj\"ostrand~\cite{HitrikSjostrand04, HitrikSjostrand05, HitrikSjostrandVuNgoc07, HitrikSjostrand08, HitrikSjostrand08b, HitrikSjostrand12, HitrikSjostrand18} is the \emph{analyticity} of the involved operators. One of the novelty of the present article compared with these references is Theorem~\ref{t:main_result_1} where we only suppose that the operators are \emph{smooth}, i.e. quantizing $\mathcal{C}^{\infty}$ symbols. This Theorem shows what can be said under these lower regularity assumptions and how this is influenced by the subprincipal symbols of the selfadjoint part as it was the case in~\cite{AschLebeau}. This will be achieved by building on the dynamical construction used by the first author and Maci\`a  for studying Wigner measures of semiclassical harmonic oscillators in~\cite{Ar_Mac18} -- see also~\cite{MaciaRiviere16, MaciaRiviere17} in the case of Zoll manifolds. As in~\cite{Ar_Mac18}, we restrict ourselves to the case of nonselfadjoint perturbations of semiclassical harmonic oscillators on $\mathbb{R}^d$. Yet it is most likely that the methods presented here can be adapted to deal with semiclassical operators associated with more general completely integrable systems, including damped wave equations on Zoll manifolds.

\subsection{Nonselfadjoint harmonic oscillators}

Let us now describe the spectral framework we are interested in. We fix $\omega=(\omega_1,\ldots,\omega_d)$ 
to be an element of $(\mathbb{R}_+^*)^d$ and we set $\widehat{H}_\hbar$ to be the semiclassical harmonic oscillator given by
\begin{equation}
\label{quantum_harmonic_oscillator}
\widehat{H}_\hbar: = \frac{1}{2} \sum_{j=1}^d \omega_j \big( - \hbar^2 \partial_{x_j}^2 +  x_j^2 \big). 
\end{equation}
We want to understand the spectral properties of nonselfadjoint perturbations of $\widehat{H}_\hbar$. Before being more precise on that issue, 
let us recall that the symbol $H$ of $\widehat{H}_\hbar$ is given by the classical harmonic oscillator:
\begin{equation}
\label{classical_harmonic_oscillator}
H(x,\xi) = \frac{1}{2}  \sum_{j=1}^d  \omega_j \big( \xi_j^2 +  x_j^2 \big), \quad (x,\xi) \in \R^{2d},
\end{equation}
whose induced Hamiltonian flow will be denoted by $\phi^H_t$. A brief account on the dynamical properties of this flow is given 
in paragraph~\ref{s:the_harmonic_ocillator}. For any smooth function $a \in \mathcal{C}^\infty(\R^{2d})$, we define its average $\langle a \rangle$ 
by the Hamiltonian flow $\phi_t^H$ as
\begin{equation}
\label{average}
\langle a \rangle(x,\xi) := \lim_{T \to \infty} \frac{1}{T} \int_0^T a \circ \phi_t^H(x,\xi) \, dt\ \in\mathcal{C}^{\infty}(\mathbb{R}^{2d}),
\end{equation}
whose properties are related to the Diophantine properties of $\omega$ -- see paragraph~\ref{s:the_harmonic_ocillator} for details.

Fix now two smooth functions $A$ and $V$ in $\mathcal{C}^{\infty}(\mathbb{R}^{2d},\mathbb{R})$ all of whose derivatives (at any order) are bounded. 
Following~\cite[Ch.~4]{Zw12}, one can define the Weyl quantization of these smooth symbols:
$$\widehat{A}_{\hbar}:=\Op_\hbar^w(A),\quad\text{and}\quad\widehat{V}_{\hbar}:=\Op_\hbar^w(V).$$
These are selfadjoint operators which are bounded on $L^2(\mathbb{R}^d)$ thanks to the Calder\'on-Vaillancourt Theorem. We 
aim at describing the asymptotic properties of the following nonselfadjoint operators in the semiclassical limit $\hbar\rightarrow 0^+$:
$$
\widehat{P}_{\hbar}:= \widehat{H}_\hbar +\delta_\hbar \widehat{V}_{\hbar}+ i \hbar  \widehat{A}_\hbar,
$$
where $\delta_\hbar\rightarrow 0$ as $\hbar\rightarrow 0^+$. More precisely, we focus on sequences of 
(pseudo-)eigenvalues $\lambda_\hbar=\alpha_\hbar + i  \hbar\beta_\hbar$ 
such that there exist $\beta\in\mathbb{R}$ and $(v_\hbar)_{\hbar\rightarrow 0^+}$ in $L^2(\mathbb{R}^d)$ for which
\begin{equation}\label{e:eigenvalue-equation}
(\alpha_\hbar, \beta_\hbar) \to (1, \beta), \quad \text{as } \hbar \to 0^+,\quad\text{and}
\quad\widehat{P}_{\hbar} \, v_\hbar = \lambda_\hbar \, v_\hbar+r_{\hbar},\quad \|v_{\hbar}\|_{L^2}=1.\end{equation}
Here $r_{\hbar}$ should be understood as a small remainder term which will be typically of order $o(\hbar)$. This remainder term allows us to encompass 
the case of quasimodes which is important to get resolvent estimates. 

\begin{remark}
 All along this work, we shall consider subsequences $\hbar_n \to 0^+$ so that the above convergence property holds. In order to alleviate notations, we will omit the index $n$ and just write $\hbar\to 0^+$, $\lambda_{\hbar}=\lambda_{\hbar_n}$, $v_\hbar=v_{\hbar_n}$, etc. For a similar reason, we do not relabel subsequences. This kind of conventions is standard when working with semiclassical parameters.
\end{remark}

Recall from the works of Markus-Matsaev~\cite{Markus88,Markus_Matsaev80} and Sj\"ostrand~\cite[Th.~5.2]{Sjostrand00} that true eigenvalues exist and that, counted with their algebraic multiplicity, they verify Weyl 
asymptotics as $\hbar\rightarrow 0^+$. It also follows from the works of Rauch-Taylor~\cite{RauchTaylor}, Lebeau~\cite{Lebeau96} and Sj\"ostrand~\cite[Lemma~2.1]{Sjostrand00} 
that
\begin{prop}\label{p:known_result} Let $(\lambda_\hbar=\alpha_\hbar +  i \hbar\beta_\hbar)_{\hbar\rightarrow 0^+}$ 
be a sequence verifying~\eqref{e:eigenvalue-equation} with $\beta_{\hbar}\rightarrow\beta$ and $r_{\hbar}=o(\hbar)$. Then, 
one has
\begin{equation}
\beta \in \left[ \min_{z \in H^{-1}(1)} \langle A \rangle(z),\max_{z \in H^{-1}(1)} \langle A \rangle(z) \right].
\end{equation}
\end{prop}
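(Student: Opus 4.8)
The plan is to describe the quasimodes $(v_\hbar)$ by a semiclassical measure $\mu$, to show that $\mu$ must solve a transport equation along $\phi^H_t$ with an exponential source term, and then to read off the position of $\beta$ from this equation. Up to a subsequence, the Wigner distributions of the normalized family $(v_\hbar)$ converge in the weak-$\ast$ sense to a nonnegative Radon measure $\mu$ on $\R^{2d}$. Equation~\eqref{e:eigenvalue-equation} gives $\|\widehat{H}_\hbar v_\hbar\|_{L^2}=O(1)$ (since $\delta_\hbar,\hbar\to 0$ and $\widehat{A}_\hbar,\widehat{V}_\hbar$ are bounded), and, $H$ being a positive definite quadratic form, hence elliptic at infinity, this prevents any escape of mass: $\mu$ is then a probability measure, as in~\cite{Ar_Mac18}. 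Writing moreover $\widehat{H}_\hbar v_\hbar=(\lambda_\hbar-\delta_\hbar\widehat{V}_\hbar-i\hbar\widehat{A}_\hbar)v_\hbar+r_\hbar$ and using $\alpha_\hbar\to 1$, $\delta_\hbar\to 0$, $r_\hbar=o(\hbar)$, one gets $\|(\widehat{H}_\hbar-1)v_\hbar\|_{L^2}=o(1)$, hence $\supp\mu\subseteq H^{-1}(1)$ by the usual ellipticity argument.

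The main point is the identity
\[
\int_{\R^{2d}}\{H,b\}\,d\mu=2\int_{\R^{2d}}(A-\beta)\,b\,d\mu,\qquad b\in\test(\R^{2d},\R),
\]
where $\{H,b\}:=\frac{d}{dt}\big|_{t=0}\,b\circ\phi^H_t$ (with conventions so that $\frac{1}{i\hbar}[\widehat{H}_\hbar,\Op_\hbar^w(b)]=\Op_\hbar^w(\{H,b\})+O(\hbar^2)$). To get it I would compute $\langle[\widehat{P}_\hbar,B]v_\hbar,v_\hbar\rangle$, $B=\Op_\hbar^w(b)$, in two ways: on one side, $\widehat{P}_\hbar v_\hbar=\lambda_\hbar v_\hbar+r_\hbar$ and $\widehat{P}_\hbar^*=\widehat{P}_\hbar-2i\hbar\widehat{A}_\hbar$ give $\langle[\widehat{P}_\hbar,B]v_\hbar,v_\hbar\rangle=-2i\hbar\beta_\hbar\langle Bv_\hbar,v_\hbar\rangle+2i\hbar\langle\widehat{A}_\hbar Bv_\hbar,v_\hbar\rangle+o(\hbar)$; on the other, the symbolic calculus gives $[\widehat{P}_\hbar,B]=i\hbar\,\Op_\hbar^w(\{H,b\})+i\hbar\delta_\hbar\,\Op_\hbar^w(\{V,b\})+O(\hbar^2)$. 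Dividing by $i\hbar$ and letting $\hbar\to 0^+$, the $\widehat{V}_\hbar$–contribution disappears exactly because $\delta_\hbar\to 0$ (this is the only role of $\delta_\hbar$, and the reason $V$ does not appear in the statement), and the remaining terms pass to the limit through $\mu$. Already taking $b$ equal to a cutoff $\chi=\psi\circ H$, $\psi\in\test(\R)$, $\psi\equiv 1$ near $1$, gives $\int A\,d\mu=\beta$, hence $\beta\in[\min_{H^{-1}(1)}A,\max_{H^{-1}(1)}A]$; the sharper bound in terms of $\langle A\rangle$ demands more.

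With $\chi$ as above, set $g_s(z):=\exp\!\big(2\int_{-s}^{0}(A-\beta)\circ\phi^H_u(z)\,du\big)$ for $s\in\R$: a smooth function with bounded derivatives satisfying $\partial_s g_s=2(A-\beta)g_s-\{H,g_s\}$; since also $\{H,\chi\}=0$, applying the identity above with $b=g_s\chi\in\test(\R^{2d})$ (using $\chi\equiv 1$ and $\{H,\chi\}=0$ near $\supp\mu$) yields $\frac{d}{ds}\int_{\R^{2d}}g_s\,d\mu=0$; as $g_0\equiv 1$ and $\mu$ is a probability measure, $\int_{\R^{2d}}g_s\,d\mu=1$ for every $s$. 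Now $\frac{1}{s}\int_{-s}^{0}(A-\beta)\circ\phi^H_u(z)\,du\to\langle A\rangle(z)-\beta$ as $s\to+\infty$, because $u\mapsto A\circ\phi^H_u(z)$ is quasi-periodic (see~\eqref{average} and paragraph~\ref{s:the_harmonic_ocillator}). Hence, if one had $\langle A\rangle>\beta$ everywhere on the compact set $\supp\mu$, then $g_s\to+\infty$ pointwise on $\supp\mu$ as $s\to+\infty$, and Fatou's lemma would give $\int_{\R^{2d}}g_s\,d\mu\to+\infty$, contradicting $\int_{\R^{2d}}g_s\,d\mu=1$; therefore $\min_{H^{-1}(1)}\langle A\rangle\le\beta$. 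Letting $s\to-\infty$ rules out in the same way $\langle A\rangle<\beta$ everywhere on $\supp\mu$ and gives $\beta\le\max_{H^{-1}(1)}\langle A\rangle$, which is the claim.

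The steps I expect to be most delicate are the absence of escape of mass for $\mu$ — genuinely using the harmonic-oscillator structure, where I would rely on~\cite{Ar_Mac18} — and the bookkeeping in the symbolic calculus, since $\widehat{H}_\hbar$ has an unbounded symbol, which is handled by systematically inserting the energy cutoff $\chi$. The use of Fatou's lemma in the last step is deliberate: it avoids any uniformity of the Birkhoff averages $\frac{1}{s}\int_{-s}^{0}A\circ\phi^H_u\,du$, and hence any Diophantine hypothesis on $\omega$, which is not available in this generality. The one genuinely new point is to replace the usual flow-invariance of semiclassical measures — unavailable here, since $\mu$ only satisfies the weighted transport equation above — by this exponential-weight argument; everything else is routine.
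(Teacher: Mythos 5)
Your argument is correct and follows essentially the same route as the paper: semiclassical measure carried by $H^{-1}(1)$, the commutator identity computed in two ways, and the resulting exponential-weight identity $e^{2\beta t}=\int_{\R^{2d}} e^{2\int_0^t A\circ\phi^H_s(z)\,ds}\,\mu(dz)$ (your relation $\int g_s\,d\mu=1$ is this identity in disguise). The only difference is in the last step, where you extract the conclusion via Fatou's lemma from pointwise convergence of the Birkhoff averages, while the paper appeals directly to the definition of $\langle A\rangle$; your variant cleanly sidesteps any uniformity question about those averages.
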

Note that one always has
$$\min_{z \in H^{-1}(1)} A(z)\leq A_-:=\min_{z \in H^{-1}(1)} \langle A \rangle(z)\leq A_+:=\max_{z \in H^{-1}(1)} \langle A \rangle(z)\leq \max_{z \in H^{-1}(1)}  A(z),$$
where the inequalities may be strict. For the sake of completeness and as it will be instructive for our proof, we briefly recall the proof of this proposition\footnote{In the case where the nonselfadjoint perturbation is $\gg\hbar$ and where the 
symbols enjoy some extra analytical properties, this proposition remains true (after a proper renormalization) 
when $r_{\hbar}=0$ and when $\omega$ satisfies appropriate diophantine properties as~\eqref{diophantine} below.} 
in paragraph~\ref{ss:proof-sjostrand}. One can verify that the quantum propagator $\displaystyle\left(e^{\frac{it\widehat{P}_{\hbar}}{\hbar}}\right)_{t\geq 0}$ defines a bounded operator on $L^2(\mathbb{R}^d)$ whose norm is bounded by $e^{|t|\|\Op_\hbar(A)\|_{\mathcal{L}(L^2)}}$. Moreover, if we suppose in addition that $\langle A\rangle \geq a_0>0$ on $\mathbb{R}^{2d}$, we say that the damping term is geometrically controlled and one gets exponential decay of the quantum propagator in time~\cite{Lebeau96, HelfferSjostrand10}. More generally, controlling the way pseudo-eigenvalues accumulate on the real axis provides informations on the decay rate of the quantum propagator~\cite{Lebeau96, HelfferSjostrand10}, and this is precisely the question we are aiming at when $\langle A\rangle$ may vanish.

\subsection{The smooth case}
Let us now explain our main results which show how the selfadjoint term $\widehat{V}_{\hbar}$ influences the way that the eigenvalues may 
accumulate on the boundary of the interval given by Proposition~\ref{p:known_result}. In the smooth case, our main result reads as follows:

\begin{teor}
\label{t:main_result_1}
Suppose that $A\geq 0$ and that, for every $(x,\xi)\in H^{-1}(1)\cap\langle A\rangle^{-1}(0)$, there exists $T > 0$ such that
\begin{equation}\label{e:control-subprincipal}\langle A\rangle\circ \phi^{\langle V\rangle}_{T}(x,\xi)>0,\end{equation}
where $\phi_t^{\langle V\rangle}$ is the Hamiltonian flow generated by $\langle V\rangle$. For every $R>0$, 
there exists\footnote{The (more or less explicit) constant $\varepsilon_R$ coming out from our proof verifies $\lim_{R\rightarrow+\infty}\varepsilon_R=0.$} $\varepsilon_R>0$ such that, for
$$\delta_{\hbar}\geq\varepsilon_R^{-1}\hbar^2,$$
and, for every sequence $(\lambda_\hbar=\alpha_\hbar +  i \hbar\beta_\hbar)_{\hbar\rightarrow 0^+}$ 
verifying~\eqref{e:eigenvalue-equation} with $\|r_{\hbar}\|\leq\varepsilon_R\hbar\delta_{\hbar}$,
$$\liminf_{\hbar\rightarrow 0^+}\frac{\beta_{\hbar}}{\delta_{\hbar}}> R.$$
\end{teor}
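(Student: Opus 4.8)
The plan is to argue by contradiction via a semiclassical measure / two-microlocal analysis adapted from the construction of Arnaiz--Maci\`a. Suppose there is a sequence of quasimodes $(v_\hbar)$ as in~\eqref{e:eigenvalue-equation} with $\beta_\hbar/\delta_\hbar$ staying $\leq R$ along a subsequence. Testing the equation $\widehat{P}_{\hbar}v_\hbar=\lambda_\hbar v_\hbar+r_\hbar$ against $v_\hbar$ and taking imaginary parts gives $\beta_\hbar\|v_\hbar\|^2=\langle \widehat{A}_\hbar v_\hbar,v_\hbar\rangle+O(\|r_\hbar\|/\hbar)$, so that $\langle\widehat{A}_\hbar v_\hbar,v_\hbar\rangle=O(\delta_\hbar)$; since $A\geq 0$ this forces the Wigner distribution of $v_\hbar$ to concentrate, at leading order, on $\langle A\rangle^{-1}(0)\cap H^{-1}(1)$ (using that $\widehat{H}_\hbar v_\hbar$ is close to $v_\hbar$ to localize on the energy shell, together with positivity and the sharp G\aa rding inequality to pass from $A$ to $\langle A\rangle$ after averaging by $\phi^H_t$). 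So the first step is: extract a semiclassical measure $\mu$ carried by $H^{-1}(1)\cap\langle A\rangle^{-1}(0)$, invariant under $\phi^H_t$.

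The second and central step is to show that $\mu$ is additionally invariant under the flow $\phi^{\langle V\rangle}_t$, up to the relevant order. This is where the hypothesis $\delta_\hbar\gg\hbar^2$ enters: after conjugating (averaging) $\widehat{P}_\hbar$ by the harmonic flow, the operator $\widehat{H}_\hbar$ acts like a scalar on the relevant (nearly) joint eigenspaces, the term $i\hbar\widehat{A}_\hbar$ contributes at order $\hbar\delta_\hbar$ once restricted to the support of $\mu$ where $\langle A\rangle$ vanishes to second order (because $\langle A\rangle\geq0$ has a minimum there), and the dominant remaining dynamics on the $H^{-1}(1)$ leaf is generated by $\delta_\hbar\Op_\hbar^w(\langle V\rangle)$. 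Rescaling time by $\delta_\hbar$ and propagating for a time of order $\delta_\hbar^{-1}$ (which is $\gg \hbar^{-1}\hbar^2\delta_\hbar^{-1}\cdot(\text{something})$, i.e. long enough in semiclassical units precisely because $\delta_\hbar\geq\varepsilon_R^{-1}\hbar^2$), Egorov's theorem for $e^{it\widehat{H}_\hbar/\hbar}$ followed by the flow of $\delta_\hbar\widehat{V}_\hbar$ yields that $\mu$ is invariant under $\phi^{\langle V\rangle}_t$ for all $t$. The errors in Egorov are $O(\hbar)$ per unit time over a time $O(\delta_\hbar^{-1})$, hence $O(\hbar/\delta_\hbar)=O(\varepsilon_R)\to0$, which is why the threshold $\hbar^2$ is exactly the natural one; the quasimode error contributes $\|r_\hbar\|/(\hbar\delta_\hbar)\leq\varepsilon_R\to0$.

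The third step closes the contradiction: $\mu$ is a nonzero (mass $1$ at leading order) measure on $H^{-1}(1)$, invariant under both $\phi^H_t$ and $\phi^{\langle V\rangle}_t$, and supported in $\langle A\rangle^{-1}(0)$. But hypothesis~\eqref{e:control-subprincipal} says that through every point of $H^{-1}(1)\cap\langle A\rangle^{-1}(0)$ the orbit of $\phi^{\langle V\rangle}_t$ eventually enters $\{\langle A\rangle>0\}$; invariance of $\operatorname{supp}\mu$ under $\phi^{\langle V\rangle}_t$ then forces $\operatorname{supp}\mu$ to meet $\{\langle A\rangle>0\}$, a contradiction with $\operatorname{supp}\mu\subset\langle A\rangle^{-1}(0)$. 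Hence no such quasimode sequence exists, which is exactly the claimed lower bound $\liminf \beta_\hbar/\delta_\hbar>R$. The conclusion with $R$ arbitrary and $\varepsilon_R\to0$ then follows by tracking constants, since making $R$ large only forces $\langle\widehat{A}_\hbar v_\hbar,v_\hbar\rangle$ to be small on a finer scale, which is absorbed by shrinking $\varepsilon_R$.

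The main obstacle I anticipate is Step 2: making rigorous that, after averaging along $\phi^H_t$, the effective propagator on the energy leaf is genuinely $\exp(it\,\delta_\hbar\Op_\hbar^w(\langle V\rangle)/\hbar)$ up to $o(1)$ errors over time $O(\delta_\hbar^{-1})$ — this requires a careful normal-form / averaging construction (as in~\cite{Ar_Mac18}) controlling commutators $[\widehat{H}_\hbar,\widehat{V}_\hbar]$ and the quadratic vanishing of $\langle A\rangle$, and it is here that the precise power $\hbar^2$ and the behaviour $\varepsilon_R\to0$ get pinned down; the second-microlocalization needed to handle points where $\phi^H_t$ is not periodic (the non-Zoll feature emphasized in the introduction) is the technically delicate part.
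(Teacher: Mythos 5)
Your outline gets the eventual contradiction right — a semiclassical measure on $H^{-1}(1)\cap\langle A\rangle^{-1}(0)$ that must leave this set under $\phi_t^{\langle V\rangle}$ — but the mechanism you propose in Step 2 has genuine gaps. Your Egorov error accounting, ``$O(\hbar)$ per unit time over $O(\delta_\hbar^{-1})$, hence $O(\hbar/\delta_\hbar)=O(\varepsilon_R)\to 0$,'' is wrong: with $\delta_\hbar\geq\varepsilon_R^{-1}\hbar^2$ one has $\hbar/\delta_\hbar\leq\varepsilon_R/\hbar\to\infty$, so the claimed error bound actually diverges. Even with a refined normal form that exploits Weyl symmetry to push per-unit-time errors to $O(\hbar^2)$, the accumulated error over $\hbar$-time $\delta_\hbar^{-1}$ is at best $O(\varepsilon_R)$ — a fixed constant, not $o(1)$ — so $\mu$ is at most \emph{approximately} transported by $\phi_t^{\langle V\rangle}$, never exactly invariant, and the exact support argument you rely on in Step 3 does not follow. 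In addition, you invoke sharp G\aa rding to exploit $A\geq 0$: sharp G\aa rding only controls negativity up to $O(\hbar)$ and would force $\delta_\hbar\gtrsim\hbar$, whereas the $\hbar^2$ threshold in the statement is precisely the signature of the \emph{Fefferman--Phong} inequality (hence also of the Weyl quantization). Identifying the right positivity estimate is not a detail here; it is what makes the stated scale of $\delta_\hbar$ achievable.

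The paper's actual proof avoids long-time propagation entirely. It tests the quasimode equation against $\Op_\hbar^w(\langle a\rangle)$ with $\langle a\rangle\geq 0$ real and $\{H,\langle a\rangle\}=0$, applies Fefferman--Phong to the resulting symbol, divides by $\delta_\hbar$, and passes to the limit under the contradiction hypothesis $2\beta_\hbar/\delta_\hbar\to c_0\in[0,2R]$ to obtain
\begin{equation*}
c_0\,\mu(\langle a\rangle)\;\geq\;\mu\big(X_{\langle V\rangle}\langle a\rangle\big)-C\varepsilon_R,
\end{equation*}
valid uniformly for $\langle a\rangle\circ\phi_t^{\langle V\rangle}$ with $t\in[0,T_1]$ (after using the $\phi_t^H$-invariance of $\mu$ to replace $V$ by $\langle V\rangle$). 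Taking $a=A$, using $\mu(\langle A\rangle)=0$ and the quantitative form of~\eqref{e:control-subprincipal} (compactness gives $T_1,\varepsilon_0>0$ with $\int_0^{T_1}\langle A\rangle\circ\phi_t^{\langle V\rangle}\,dt>\varepsilon_0$ on $\supp\mu$), a Gr\"onwall-type integration over $[0,T_1]$ forces $\varepsilon_0\leq C\varepsilon_R T_1(e^{T_1 c_0}-1)/c_0$, which fails once $\varepsilon_R$ is chosen small enough depending on $R$ (through the bound $c_0\leq 2R$). This finite-time, quantitative balancing of the transport inequality against the geometric control is the ingredient your outline replaces by an exact-invariance claim that cannot be justified at the scale $\delta_\hbar\sim\hbar^2$.
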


\begin{remark} If $\delta_{\hbar}\gg\hbar^2$ and $\|r_{\hbar}\|\ll\hbar\delta_{\hbar}$, then this Theorem shows 
that
 $$\lim_{\hbar\rightarrow 0^+}\frac{\beta_{\hbar}}{\delta_{\hbar}}=+\infty.$$
\end{remark}

In other words, under the geometric control condition~\eqref{e:control-subprincipal}, eigenvalues cannot accumulate too fast on the real axis as $\hbar\rightarrow 0^+$. 
We emphasize that, compared with the analytic case treated in~\cite{AschLebeau}, our result applies a priori to quasimodes. 
Hence, it also yields the following resolvent estimate in the smooth case. For every $R>0$, there exists 
some constant $\varepsilon_R>0$ such that, for $\hbar>0$ small enough and for $\delta_{\hbar}\geq\varepsilon_R^{-1}\hbar^2,$
\begin{equation}\label{e:resolvent}\frac{\text{Im}\ \lambda}{\hbar}\leq R\delta_{\hbar}\implies
\left\|\left(\widehat{P}_{\hbar}-\lambda\right)^{-1}\right\|_{L^2\rightarrow L^2}\leq\frac{1}{\varepsilon_R\hbar\delta_{\hbar}},\end{equation}
which is useful regarding energy decay estimates and asymptotic expansion of the corresponding 
semigroup -- see e.g.~\cite{HelfferSjostrand10}.

Note that 
the assumption that $A\geq 0$ makes the proof a little bit simpler but we could deal with more general functions by using the (nonselfadjoint) averaging method 
from~\cite{Sjostrand00} and by making some appropriate Diophantine assumptions -- see e.g. paragraph~\ref{averaging_method}. Our proof will crucially 
use the Fefferman-Phong inequality (hence the Weyl quantization) and this allows us to reach perturbations of size $\delta_{\hbar}\gtrsim \hbar^2$. If we had used another choice (say for instance the standard one), we would have only been able to use the Garding inequality and it would have lead us to the stronger restriction $\delta_{\hbar}\gtrsim \hbar$.

In the case where $V=0$ 
and under some analyticity assumptions in dimension $2$, it was shown by Hitrik and 
Sj\"ostrand~\cite[Th.~6.7]{HitrikSjostrand04} that one can find some eigenvalues such that $\beta_{\hbar}$ is exactly 
of order $\hbar$ provided that $\phi_t^H$ is periodic and that $\langle A\rangle$ vanishes on finitely many closed orbits. 
Hence, our hypothesis~\eqref{e:control-subprincipal} on the 
subprincipal $V$ is crucial here. Note that this geometric condition is similar to the one appearing in~\cite{Ar_Mac18} for the study of semiclassical 
measures of the Schr\"odinger equation -- see also~\cite{MaciaRiviere16, MaciaRiviere17} in the case of Zoll manifolds. 
As we shall see, ensuring this dynamical property depends on the Diophantine properties of $\omega$. Recall that, 
to each $\omega$, one can associate the submodule
\begin{equation}
\label{submodule_intro}
\Lambda_\omega := \{ k \in \mathbb{Z}^d \, : \,  \omega \cdot k = 0 \}.
\end{equation}
When the resonance module $\Lambda_\omega = \{ 0 \}$, we will see in paragraph~\ref{s:the_harmonic_ocillator} that our geometric control condition~\eqref{e:control-subprincipal} can only be satisfied if $\langle A\rangle >0$. 
A typical case in which our dynamical condition holds is when $H^{-1}(1)\cap\langle A\rangle^{-1}(0)$ consists in a disjoint union of a finite number of minimal $\phi_t^H$-invariant 
tori $(\mathcal{T}_k)_{k=1,\ldots N}$. In this case, our dynamical condition is equivalent to say that the Hamiltonian vector field $X_{\langle V \rangle}$ satisfies
$$
\forall 1\leq k\leq N,\quad\forall z\in\mathcal{T}_k,\quad X_{\langle V\rangle}(z)=\frac{d}{dt}\left(\phi_t^{\langle V\rangle}(z)\right)_{|t=0} \notin T_z\mathcal{T}_k.
$$

\subsection{The analytic case}

We now discuss the case where the functions $A$ and $V$ enjoy some analyticity properties. To that aim, we follow a method introduced by 
Asch and Lebeau in the case of the damped wave equation on the $2$-sphere~\cite{AschLebeau}. 
We will explain how to adapt this strategy in the framework of harmonic oscillators which are not necessarly periodic.  
The upcoming results should be viewed as an extension of Asch-Lebeau's cons\-truction to semiclassical harmonic oscillators and as an 
illustration on what can be gained via analyticity compared with the purely dynamical approach used to prove 
Theorem~\ref{t:main_result_1}. We emphasize that the argument presented here only holds for true 
eigenmodes, i.e. $r_\hbar=0$ in~\eqref{e:eigenvalue-equation}. In particular, it does not seem to yield any resolvent estimate like~\eqref{e:resolvent} 
which is crucial to deduce some results on the semigroup generated by $\widehat{P}_{\hbar}$.

We now assume some extra conditions on the symbols $H$, $V$ and $A$. First, given the vector 
of frequencies $\omega := (\omega_1, \ldots , \omega_d)$ 
of the harmonic oscillator $H$, we shall say that $\omega\in \R^d$ is \textit{partially Diophantine}~\cite[Eq.~(2.19)]{Llav03} if one has:
\begin{equation}
\label{diophantine}
\vert \omega \cdot k \vert^{-1} \leq C \vert k \vert^\nu, \quad \forall k \in \mathbb{Z}^d \setminus \Lambda_\omega.
\end{equation}
This restriction is due to the fact that, in the process of averaging, we will deal with the classical problem of small denominators in KAM theory. 
To keep an example in mind, note that $\omega=(1,\ldots,1)$ is obviously partially Diophantine\footnote{In that example, the flow is periodic and 
we are in the same situation as in~\cite{AschLebeau}.}.

We will make use of some analyticity assumptions on the symbols $V$ and $A$ in the following sense:
\begin{definition}
\label{spaces_analytic_functions}
Let $s > 0$. We say that  $a \in L^1(\R^{2d})$ belongs to the space $\mathcal{A}_s$ if
$$
\Vert a \Vert_{s} := \int_{\R^{2d}} \vert \widehat{a}(w) \vert e^{s \| w \|} \, dw < \infty,
$$
where $\widehat{a}$ denotes the Fourier transform of $a$ and $\| w\|$ the Euclidean norm on $\R^{2d}$.

Let $\rho,s> 0$, we introduce the space $\mathcal{A}_{\rho,s}$ of functions $a \in L^1(\R^{2d})$ such that
\begin{equation}
\label{weighted_norm}
\Vert a \Vert_{\rho ,s} := \frac{1}{(2\pi)^d} \sum_{k \in \mathbb{Z}^d} \Vert a_k \Vert_s \, e^{\rho \vert k \vert} < \infty,
\end{equation}
where
$$
a_k(z) = \int_{\mathbb{T}^d} a \circ \Phi_\tau^H(z) e^{-i k \cdot \tau} d \tau, \quad k \in \mathbb{Z}^d,
$$
with $\Phi_\tau^H$ defined by \eqref{e:multiflow}.
\end{definition}
\begin{remark} Observe that, for any $a$ element in $\mathcal{A}_s$ and for every multi-index $\alpha\in\mathbb{Z}_+^d$, $\widehat{\partial^\alpha a}$ belongs to $L^1$. Hence, $a$ is smooth and one has $\partial^\alpha a\in L^\infty$ for every $\alpha\in\mathbb{Z}_+^d$. Hence, any element in $\mathcal{A}_s$ belongs to the class $S(1)$ of symbols that are amenable to semiclassical calculus on $\mathbb{R}^d$. In particular, by~\cite[Lemma~4.10]{Zw12}, one has
\begin{equation}
\label{e:calderon_vaillancourt_for_analytic}
\forall\ a\in\mathcal{A}_s,\quad\Vert \Op_\hbar^w(a) \Vert_{\mathcal{L}(L^2)} \leq C_{d,s} \Vert a \Vert_s.
\end{equation}
As a consequence of~\eqref{e:harmonic_fourier-decomposition},
one can show that:
$\Vert a \Vert_s \leq \Vert a \Vert_{\rho,s}, \quad \forall \rho > 0.$
\end{remark}

Our next result reads:
\begin{teor}
\label{t:main_result_3} Suppose that $A$ and $V$ belong to the space $\mathcal{A}_{\rho,s}$ 
for some fixed $\rho, s > 0$ and that $\langle A\rangle \geq 0$. Assume also that $\omega$ is partially Diophantine and that, 
for every $(x,\xi) \in H^{-1}(1)\cap\langle A\rangle^{-1}(0)$, there exists $T > 0$ such that
$$\langle A\rangle\circ \phi^{\langle V\rangle}_{T}(x,\xi)>0.$$
Then there exists $\varepsilon: = \varepsilon( A ,  V  ) > 0$ such that, for
$$\delta_\hbar = \hbar,$$
and for any sequence of solutions to~\eqref{e:eigenvalue-equation} with 
$r_{\hbar}=0$,
\begin{equation}
\beta \geq \varepsilon.
\end{equation}
\end{teor}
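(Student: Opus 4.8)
The plan is to reduce the analytic statement to the normal-form construction of Asch--Lebeau, replacing the exact periodicity of the geodesic flow on $S^2$ by the averaging procedure over the torus action generated by $H$. First I would use the partially Diophantine assumption~\eqref{diophantine} on $\omega$ together with the analyticity classes $\mathcal{A}_{\rho,s}$ to run a (nonselfadjoint) averaging/Birkhoff normal form argument: one conjugates $\widehat{P}_\hbar$ by a bounded analytic semiclassical Fourier integral operator (constructed as the quantization of a time-one map of an analytic Hamiltonian), so that modulo a term $O(\hbar^\infty)$ (or $O(\hbar^N)$ with $N$ large, controlled via the analytic norms and the Diophantine exponent $\nu$) the operator commutes with $\widehat{H}_\hbar$ and its symbol depends only on the averaged symbols $\langle V\rangle$ and $\langle A\rangle$. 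The small-denominator estimates are exactly where the partial Diophantine condition and the weighted norm $\|\cdot\|_{\rho,s}$ enter; one needs to track the loss of analyticity radius $\rho$ at each step so that the formal series still converges, which is the technical heart of this part.

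Next, since $\widehat H_\hbar$ has discrete spectrum $\hbar\sum_j\omega_j(n_j+1/2)$, after normal form I would restrict attention to the spectral subspace of $\widehat H_\hbar$ corresponding to energies in a shrinking window around $1$; on each such subspace $\widehat P_\hbar$ acts (up to the controlled remainder) as $\widehat H_\hbar + \hbar\,\mathrm{Op}_\hbar^w(\langle V\rangle) + i\hbar\,\mathrm{Op}_\hbar^w(\langle A\rangle)$ plus lower order terms. Dividing the eigenvalue equation~\eqref{e:eigenvalue-equation} by $\hbar$ and writing $\lambda_\hbar=\alpha_\hbar+i\hbar\beta_\hbar$, the imaginary part $\beta$ in the limit is governed by the operator $\mathrm{Op}_\hbar^w(\langle A\rangle)$ conjugated by the flow of $\mathrm{Op}_\hbar^w(\langle V\rangle)$ restricted to the energy surface $H^{-1}(1)$. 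The hypothesis $\langle A\rangle\ge 0$ gives $\beta\ge 0$ immediately (as in Proposition~\ref{p:known_result}), so the content is strict positivity.

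To get the quantitative lower bound $\beta\ge\varepsilon$, I would argue by contradiction following Asch--Lebeau: suppose $\beta=0$ along a subsequence. Testing the (normalized) eigenvalue equation against $v_\hbar$ and taking imaginary parts forces $\langle \mathrm{Op}_\hbar^w(\langle A\rangle) v_\hbar, v_\hbar\rangle\to 0$, so any semiclassical defect measure $\mu$ of $(v_\hbar)$ is carried by $H^{-1}(1)\cap\langle A\rangle^{-1}(0)$ and, by the commutation with $\widehat H_\hbar$, is invariant under $\phi_t^H$; moreover a further propagation argument (using that in the normal form the real part involves $\mathrm{Op}_\hbar^w(\langle V\rangle)$, whose classical flow is $\phi_t^{\langle V\rangle}$) shows $\mu$ is also invariant — or at least has invariant support — under $\phi_t^{\langle V\rangle}$. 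But then $\mathrm{supp}\,\mu$ would be a $\phi_t^{\langle V\rangle}$-invariant subset of $\langle A\rangle^{-1}(0)$, contradicting the geometric control assumption that for every point of $H^{-1}(1)\cap\langle A\rangle^{-1}(0)$ some positive time pushes it into $\{\langle A\rangle>0\}$. Quantifying this (uniform time $T$ on a compact energy shell, Egorov with analytic-class error bounds) yields a fixed $\varepsilon>0$.

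The main obstacle I expect is controlling the averaging/normal-form remainder with \emph{analytic} precision: one must show that after infinitely many averaging steps the conjugating operator is still a bounded operator on $L^2$ (with bound uniform in $\hbar$) and that the error is genuinely negligible compared to $\hbar$ — i.e. truly $o(\hbar)$ and not merely $O(\hbar^2)$ — since here $\delta_\hbar=\hbar$ and there is no room to absorb an $O(\hbar^2)$ error as in Theorem~\ref{t:main_result_1}. This forces the use of the analytic classes $\mathcal{A}_{\rho,s}$ and the partially Diophantine condition in an essential way, and explains why $r_\hbar=0$ is required: a nonzero quasimode remainder would compete with the exponentially small normal-form error. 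The rest — the defect-measure propagation and the contradiction with geometric control — is a by-now standard argument modeled on~\cite{AschLebeau} and~\cite{Ar_Mac18}.
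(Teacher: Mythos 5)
Your proposal captures the right ingredients (averaging, analyticity, geometric control, defect measures) but it misidentifies the two load-bearing mechanisms of the actual proof, and the contradiction argument you sketch does not obviously close.

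First, the paper performs a \emph{single} averaging step, not an iterated Birkhoff normal form. After one conjugation by a uniformly bounded FIO $\mathcal{F}_\hbar$ one gets
$\mathcal{F}_\hbar\widehat{P}_\hbar\mathcal{F}_\hbar^{-1}=\widehat{H}_\hbar+\hbar\Op_\hbar^w(\langle V\rangle)+i\hbar\Op_\hbar^w(\langle A\rangle)+\hbar^2\widehat{R}_\hbar$, and this $O(\hbar^2)$ remainder \emph{is} absorbable: once one divides by $\hbar$ and takes imaginary parts it contributes $O(\hbar)$, which vanishes in the semiclassical limit. So your claim that an $O(\hbar^2)$ error leaves ``no room'' and that one must push the remainder to $O(\hbar^\infty)$ is not correct, and it misattributes the role of analyticity. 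Analyticity enters not to kill the averaging remainder, but to run the \emph{second} conjugation, which is the real engine.

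Second, and this is the genuine gap, the paper's key step is a further conjugation by $\tilde{\mathcal{F}}_\hbar(\varepsilon)=e^{\frac{\varepsilon}{\hbar}\Op_\hbar^w(\langle F_3\rangle)}$ with $\varepsilon$ of \emph{order one} in $\hbar$, where $\langle F_3\rangle$ Poisson-commutes with $H$ (so that $\widehat{H}_\hbar$ is exactly preserved under this conjugation, $H$ being quadratic and the quantization being Weyl). This operator is \emph{not} uniformly bounded on $L^2$ as $\hbar\to 0^+$ — its norm can grow like $e^{C\varepsilon/\hbar}$ — which is precisely why $r_\hbar=0$ is required: an eigenvector maps to an eigenvector and can be renormalized, whereas a quasimode remainder $r_\hbar$ would be amplified uncontrollably by $\tilde{\mathcal{F}}_\hbar(\varepsilon)$. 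The analytic Egorov Lemma (Lemma~\ref{analytic_egorov}) is what makes this conjugation tractable at the symbol level, tracking the loss of analyticity radius $\sigma$ so that one step of size $\varepsilon\|\langle F_3\rangle\|_s\leq\sigma^2/2$ is allowed. This yields, for the new normalized eigenvector $\tilde{v}_\hbar$,
$\beta_\hbar=\big\langle\Op_\hbar^w\big(\langle A\rangle-\varepsilon\{\langle F_3\rangle,\langle V\rangle\}\big)\tilde{v}_\hbar,\tilde{v}_\hbar\big\rangle+O(\varepsilon^2)+O(\hbar)$, and letting $\hbar\to 0^+$ gives $\beta=\tilde{\mu}\big(\langle A\rangle+\varepsilon\{\langle V\rangle,\langle F_3\rangle\}\big)+O(\varepsilon^2)$. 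The lower bound $\beta\geq c_0\varepsilon+O(\varepsilon^2)$ is then obtained \emph{directly}, by constructing $F_3(z)=\int_0^{t_0}\int_0^t\langle A\rangle\circ\phi_\tau^{\langle V\rangle}(z)\,d\tau\,dt$, so that $\{\langle V\rangle,F_3\}=\int_0^{t_0}\langle A\rangle\circ\phi_t^{\langle V\rangle}\,dt$, and observing that this is strictly positive on $H^{-1}(1)\cap\langle A\rangle^{-1}(0)$ because $t\mapsto\langle A\rangle\circ\phi_t^{\langle V\rangle}(z)$ is \emph{real analytic} and nonconstant there (this is exactly where both the analytic classes $\mathcal{A}_{\rho,s}$ and the geometric control hypothesis are used).

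By contrast, your proposed contradiction argument — ``$\beta=0$ forces $\mu$ to have $\phi_t^{\langle V\rangle}$-invariant support inside $\langle A\rangle^{-1}(0)$'' — is not established and I do not believe it goes through as stated. With $\delta_\hbar=\hbar$ the propagation identity reads $\big\langle\Op_\hbar^w\big(2(A-\beta_\hbar)\langle a\rangle+\hbar X_V\langle a\rangle\big)v_\hbar,v_\hbar\big\rangle=O(\hbar^2)$; extracting $\phi_t^{\langle V\rangle}$-invariance from this requires the first term to be $o(\hbar)$, which there is no reason to expect when $\beta_\hbar\to 0$ but possibly $\beta_\hbar\gg\hbar$. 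More fundamentally, even if you could show that such a $\mu$ cannot exist, that would only rule out $\beta=0$; it would not produce the \emph{uniform} constant $\varepsilon=\varepsilon(A,V)>0$ of the statement, which the direct construction via $F_3$ delivers for free.
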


This Theorem shows that eigenvalues of the nonselfadjoint operator $\widehat{P}_{\hbar}$ cannot accumulate on the boundary of the strip given by 
Proposition~\ref{p:known_result}. Compared with Theorem~\ref{t:main_result_1}, 
it only deals with the case of true eigenvalues and it does not seem that a good resolvent 
estimate can be easily deduced from the proof below. Finally, for the sake of simplicity, we also supposed that $\delta_{\hbar}=\hbar$ but it is most likely 
that the argument can be applied when $\delta_{\hbar}$ does not go to $0$ too slowly.

\subsection*{Acknowledgements} We are grateful to S.~Nonnenmacher for suggesting to consider how the methods from~\cite{MaciaRiviere16, MaciaRiviere17} can be applied in the nonselfadjoint context. We warmly thank J.~Sj\"ostrand and the anonymous referee for their useful comments on the article. Part of this work was achieved when V. Arnaiz was visiting 
the Laboratoire Paul Painlev\'e in Lille. V. Arnaiz has been supported by La Caixa, Severo Ochoa ICMAT, International Phd. Programme, 2014, and MTM2017-85934-C3-3-P (MINECO, Spain). G. Rivi\`ere has been partially supported by the ANR project 
GERASIC (ANR-13-BS01-0007-01) and the Labex CEMPI (ANR-11-LABX-0007-01).

\section{The classical Harmonic Oscillator}
\label{s:the_harmonic_ocillator}

The Hamiltonian equations corresponding to $H$ are given by
\begin{equation}
\label{hamilton-system}
\left \lbrace \begin{array}{l}
\dot{x}_j = \omega_j \xi_j, \\[0.2cm]
\dot{\xi}_j = - \omega_j x_j, \quad j=1, \ldots , d.
\end{array} \right.
\end{equation}
Hence, we can write the solution to this system as a superposition of $d$-independent commuting flows as follows:
$$
\big( x(t), \xi(t) \big) = \phi_t^H(x,\xi) := \phi_{\omega_d t}^{H_d} \circ \cdots \circ \phi_{\omega_1 t}^{H_1}(x,\xi), \quad (x,\xi) \in \R^{2d}, \quad t \in \R,
$$
where $H_j(x,\xi)=\frac{1}{2}(x_j^2+\xi_j^2)$ and where $\phi_t^{H_j}(x,\xi)$ denotes the associated Hamiltonian flow. 
In other words, the solution to \eqref{hamilton-system} can be written in terms of the unitary block matrices
\begin{equation}
\label{matrix}
\left( \begin{array}{c} x_j(t) \\[0.2cm]
\xi_j(t) \end{array} \right) = \left( \begin{array}{cc}
\cos( \omega_j t) &  \sin(\omega_j t) \\[0.2cm]
- \sin(\omega_j t) & \cos(\omega_j t) \end{array}
\right) \left( \begin{array}{c} x_j \\[0.2cm]
\xi_j \end{array} \right), \quad j=1, \ldots , d.
\end{equation}
Observe that each flow $\phi_t^{H_j}$ is periodic with period $2\pi$. We now introduce the transformation:
\begin{equation}
\label{e:multiflow}
\Phi_{\tau}^H := \phi_{t_d}^{H_d} \circ \cdots \circ \phi_{t_1}^{H_1}, \quad \tau=(t_1, \ldots, t_d) \in \R^d.
\end{equation}
Note that $\tau\mapsto\Phi_{\tau}^H$ is $2\pi\mathbb{Z}^d$-periodic; therefore we can view it as a function on the torus $\mathbb{T}^d:=\mathbb{R}^d/2\pi\mathbb{Z}^d$. Considering now 
the submodule $$\Lambda_\omega:=\left\{k\in\mathbb{Z}^d:k \cdot \omega=0\right\},$$ we can define the minimal torus
$$
\mathbb{T}_\omega := \Lambda_\omega^\perp/(2\pi \mathbb{Z}^d \cap \Lambda_\omega^\perp),
$$
where $\Lambda_\omega^\perp$ denotes the linear space orthogonal to $\Lambda_\omega$. The dimension of $\mathbb{T}_\omega$ is $d_\omega = d - \operatorname{rk} \Lambda_\omega$. Kronecker's theorem 
states that the family of probability measures on $\mathbb{T}^d$ defined by
$$\frac{1}{T}\int_0^T \delta_{t\omega} \,dt$$ 
converges (for the weak-$\star$ topology) to the normalized Haar measure $\nu_\omega$ on the subtorus $\mathbb{T}_\omega\subset \mathbb{T}^d$.

For any function $a \in \mathcal{C}^\infty(\R^{2d})$, $a\circ\phi_{t}^H=a\circ\Phi_{t\omega}^H$. Thus, we can write the average $\langle a \rangle$ of $a$ by the flow $\phi_t^H$ as
\begin{equation}
\label{average-formula}
\langle a \rangle(x,\xi) = \lim_{T\to\infty}\frac{1}{T}\int_0^T a\circ\Phi_{t\omega}^H(x,\xi) dt=\int_{\mathbb{T}_\omega}a\circ\Phi_\tau^H(x,\xi)\nu_\omega(d\tau)\in\mathcal{C}^\infty(\mathbb{R}^{2d}).
\end{equation}
Recall that the energy hypersurfaces $H^{-1}(E) \subset \R^{2d}$ are compact for every $E \geq 0$. 
For $E>0$, due to the complete integrability of $H$, these hypersurfaces are foliated by the 
invariant tori: $\{ \Phi_\tau^H(x,\xi):\tau\in\mathbb{T}_{\omega}\}$. Note that some invariant tori 
of the energy hypersurface $H^{-1}(E)$, $E > 0$, may have dimension less than $d_\omega$. 
For instance, if $\omega = (1,\pi)$ then $d_\omega = 2$, 
but the torus $\{ \Phi_\tau^H(0,1,0,1):\tau\in\mathbb{T}_{\omega}\} \subset H^{-1}(\pi)$ 
has dimension $1$. 
 
Observe also that $1\leq d_{\omega}\leq d$. In the case $d_{\omega}=1$ and $\omega=\omega_1(1,\ldots,1)$, the flow $\phi_t^H$ is $2\pi/\omega_1$-periodic. On the other hand, if $d_\omega = d$, then, for 
every $a\in\mathcal{C}^{\infty}(\mathbb{R}^{2d})$, there exists $\mathcal{I}(a)\in\mathcal{C}^\infty(\R^{d})$ such that
$\langle a \rangle(z) = \mathcal{I}(a)(H_1(z), \ldots, H_d(z)).$
In particular, for every $a$ and $b$ in $\mathcal{C}^{\infty}(\mathbb{R}^{2d})$, one has $\{\langle a \rangle,\langle b \rangle\}=0$ whenever $d_{\omega}=d$.

To conclude this section, we prove the following lemma:

\begin{lemma}
\label{average_analytic_norm}
If $a \in \mathcal{A}_{s}$ then $\langle a \rangle \in \mathcal{A}_{s}$ and $\Vert \langle a \rangle \Vert_s \leq \Vert a \Vert_{s}$.
\end{lemma}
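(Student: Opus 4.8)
The plan is to exploit the integral representation of the averaging operation on the Fourier side, viewing $\langle a\rangle$ as an average of $a\circ\Phi_\tau^H$ over the minimal torus $\mathbb{T}_\omega$ with respect to the Haar measure $\nu_\omega$, and then to check that each symplectic rotation $\Phi_\tau^H$ acts isometrically at the level of the Fourier transform. First I would observe that each block matrix in~\eqref{matrix} (and hence $\Phi_\tau^H$) is an orthogonal linear map of $\R^{2d}$; call it $M_\tau$. For a linear change of variables by an orthogonal matrix $M$, one has the elementary identity $\widehat{a\circ M}(w)=\widehat a(M w)$ (up to the usual normalization, which is irrelevant here since $|\det M|=1$). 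Consequently
\begin{equation*}
\int_{\R^{2d}}\big|\widehat{a\circ\Phi_\tau^H}(w)\big|\,e^{s\|w\|}\,dw=\int_{\R^{2d}}\big|\widehat a(M_\tau w)\big|\,e^{s\|w\|}\,dw=\int_{\R^{2d}}\big|\widehat a(w')\big|\,e^{s\|M_\tau^{-1}w'\|}\,dw'=\|a\|_s,
\end{equation*}
using that $\|M_\tau^{-1}w'\|=\|w'\|$ because $M_\tau$ is orthogonal and that $|\det M_\tau|=1$. Thus $a\circ\Phi_\tau^H\in\mathcal{A}_s$ with the same norm $\|a\|_s$ for every $\tau$.

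Next I would pass from the rotated symbols to their average. Since $\langle a\rangle(z)=\int_{\mathbb{T}_\omega}a\circ\Phi_\tau^H(z)\,\nu_\omega(d\tau)$ by~\eqref{average-formula}, I would compute the Fourier transform of $\langle a\rangle$ by exchanging the $z$-integral defining $\widehat{\langle a\rangle}$ with the $\tau$-integral (justified by Fubini, since $\nu_\omega$ is a probability measure and $a\circ\Phi_\tau^H$ is in $L^1$ uniformly in $\tau$), obtaining $\widehat{\langle a\rangle}(w)=\int_{\mathbb{T}_\omega}\widehat{a\circ\Phi_\tau^H}(w)\,\nu_\omega(d\tau)$. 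Then by Minkowski's integral inequality (or simply the triangle inequality for the $\nu_\omega$-average), $|\widehat{\langle a\rangle}(w)|\le\int_{\mathbb{T}_\omega}|\widehat{a\circ\Phi_\tau^H}(w)|\,\nu_\omega(d\tau)$, and multiplying by $e^{s\|w\|}$, integrating in $w$, and invoking Tonelli to swap the two integrations gives
\begin{equation*}
\|\langle a\rangle\|_s\le\int_{\mathbb{T}_\omega}\Big(\int_{\R^{2d}}\big|\widehat{a\circ\Phi_\tau^H}(w)\big|\,e^{s\|w\|}\,dw\Big)\nu_\omega(d\tau)=\int_{\mathbb{T}_\omega}\|a\|_s\,\nu_\omega(d\tau)=\|a\|_s<\infty,
\end{equation*}
which simultaneously shows $\langle a\rangle\in\mathcal{A}_s$ and the claimed bound. (Strictly, to make sense of $\widehat{\langle a\rangle}$ one should first note $\langle a\rangle\in L^1$, which follows from $\|\langle a\rangle\|_{L^1}\le\int_{\mathbb{T}_\omega}\|a\circ\Phi_\tau^H\|_{L^1}\nu_\omega=\|a\|_{L^1}<\infty$.)

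The only genuinely delicate points are bookkeeping ones: verifying that $\Phi_\tau^H$ is really orthogonal on all of $\R^{2d}$ (it is a composition of the planar rotations in~\eqref{matrix}, each acting on a pair of coordinates $(x_j,\xi_j)$, so this is immediate), and justifying the two applications of Fubini/Tonelli. For the latter, the uniform $L^1$ bound $\|a\circ\Phi_\tau^H\|_{L^1}=\|a\|_{L^1}$ together with the finiteness of $\nu_\omega$ makes the interchange legitimate, and the non-negativity of $|\widehat{a\circ\Phi_\tau^H}(w)|e^{s\|w\|}$ lets us use Tonelli without any integrability hypothesis in the final step. I do not anticipate a serious obstacle; the proof is essentially the observation that $\mathcal{A}_s$ is invariant under the orthogonal group acting on phase space with norm preserved, and that $\langle\cdot\rangle$ is an average of such actions, hence a contraction (in fact an isometry on the subspace of flow-invariant functions, though only the inequality is needed here).
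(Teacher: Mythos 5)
Your proof is correct and follows essentially the same route as the paper: you use that $\Phi_\tau^H$ is an orthogonal linear map so that the Fourier transform commutes with it (in the paper's notation, $\widehat{a\circ\Phi_\tau^H}=\widehat a\circ\Phi_\tau^H$), then pass the absolute value inside the $\nu_\omega$-average and use rotation invariance of the weight $e^{s\|w\|}$ together with Tonelli. The paper's version is more condensed (it does not separately record the isometry $\|a\circ\Phi_\tau^H\|_s=\|a\|_s$ nor spell out the $L^1$ and Fubini justifications), but the argument is the same.
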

\begin{proof}
By \eqref{average-formula}, we can write the Fourier transform of $\langle a \rangle$ as
$$
\widehat{ \langle a \rangle}(x,\xi)  = \int_{\mathbb{T}_\omega} \widehat{ a \circ \Phi_\tau^H}(x,\xi) \nu_\omega(d\tau).
$$
Moreover, since $\widehat{ a \circ \Phi_\tau^H}(x,\xi) = \widehat{a} \circ \Phi_{\tau}^H(x,\xi)$ thanks to~\eqref{matrix}, 
we have that $\widehat{ \langle a \rangle} =  \langle \widehat{a} \rangle$. Thus, using~\eqref{matrix} one more time, one finds
\begin{align*}
\Vert \langle a \rangle \Vert_s & = \int_{\R^{2d}} \vert \langle \widehat{a} \rangle (z) \vert e^{s \vert z \vert} dz \\
 & \leq  \int_{\mathbb{T}_\omega} \int_{\R^{2d}} \vert  \widehat{a} \circ \Phi_{\tau}^H(z) \vert   e^{s \vert z \vert} dz \, \nu_\omega(d\tau)\\
 & = \int_{\R^{2d}} \vert  \widehat{a} (z) \vert   e^{s \vert z \vert} dz = \Vert a \Vert_s.
\end{align*}
\end{proof}

\section{Proof of Theorem~\ref{t:main_result_1}}

We now give the proof of our main result in the $\mathcal{C}^{\infty}$ case. Before doing that, we briefly recall the proof of Proposition~\ref{p:known_result} in order to make the 
proof of Theorem~\ref{t:main_result_1} more comprehensive. Note that we use the following convention for the scalar product on $\R^d$:
$$\langle u,v\rangle_{L^2}=\int_{\R^d}u(x)\overline{v(x)}dx.$$

\subsection{Proof of Proposition~\ref{p:known_result}}\label{ss:proof-sjostrand} Let $\lambda_\hbar=\alpha_\hbar+i\hbar\beta_{\hbar}$ be a sequence of 
(pseudo-)eigenvalues verifying~\eqref{e:eigenvalue-equation}. Denote by 
$(v_{\hbar})_{\hbar\rightarrow 0^+}$ the corresponding sequence of normalized quasimodes. Introduce the Wigner distribution $W^\hbar_{v_\hbar} \in \mathcal{D}'(\R^{2d})$ associated to the function $v_\hbar$:
$$
W_{v_\hbar}^\hbar : \mathcal{C}_c^\infty(\R^{2d}) \ni a \longmapsto W^\hbar_{v_\hbar}(a) := \langle \Op_\hbar^w(a) v_\hbar , v_\hbar \rangle_{L^2(\R^d)}.
$$
According to~\cite[Ch.~5]{Zw12} and modulo extracting a subsequence, there exists a probability measure $\mu$ carried by $H^{-1}(1)$ such that $W^\hbar_{v_\hbar} \rightharpoonup \mu.$ 
The measure $\mu$ is called the semiclassical measure associated to the (sub)sequence $(v_\hbar)_{\hbar\rightarrow 0^+}$. Note that these properties of the limit points follow 
from the facts that $v_\hbar$ is normalized and that $\widehat{H}_{\hbar}v_{\hbar}=v_{\hbar}+o_{L^2}(1)$. We will now make use of the eigenvalue equation~\eqref{e:eigenvalue-equation} 
to derive an invariance property of $\mu$. Using the symbolic calculus for Weyl pseudodifferential operators~\cite[Ch.~4]{Zw12}, we have, for every 
$a \in \mathcal{C}_c^\infty(\R^{2d},\mathbb{R})$,
\begin{align*}
\big \langle [\widehat{H}_\hbar + \delta_\hbar  \widehat{V}_\hbar , \Op_\hbar^w(a)] v_\hbar, v_\hbar \big \rangle_{L^2(\R^{d})} & = 
\frac{\hbar}{i} \big \langle \Op_\hbar^w(\{ H , a \}) v_\hbar, v_\hbar \big \rangle_{L^2(\R^{d})} + O(\hbar(\delta_\hbar +\hbar)).
\end{align*}
On the other hand, using that $v_\hbar$ is a quasimode of $\widehat{P}_\hbar$ and the composition rule for the Weyl quantization~\cite[Ch.~4]{Zw12}, we also have
$$
\big \langle [\widehat{H}_\hbar + \delta_\hbar  \widehat{V}_\hbar , \Op_\hbar^w(a)] v_\hbar, v_\hbar \big \rangle_{L^2(\R^{d})} = 
2i \hbar \,  \big \langle  \Op_\hbar^w(a(A-\beta_{\hbar})) v_\hbar , v_\hbar \big \rangle_{L^2(\R^d)} +O(\|r_{\hbar}\|)+ O(\hbar^3).$$
Note that there is no $O(\hbar^2)$ term due to the fact that $a$ is real valued and to the symmetries of the Weyl quantization. 
Passing to the limit $\hbar\rightarrow 0^+$ and recalling that $\|r_{\hbar}\|=o(\hbar)$, one finds that $\mu(\{H,a\})=2\mu((\beta-A)a)$ for every $a$ in $\mathcal{C}^{\infty}_c(\R^d)$. This is equivalent to the fact that, for every $t\in\R$ 
and for every $a\in\mathcal{C}^{\infty}_c(\R^{2d})$, one has
\begin{equation}
\label{e:weak_solution}
\int_{\R^{2d}} a(z) \mu(dz) = \int_{\R^{2d}} a\circ \phi_{t}^H(z) \, e^{ 2 \int_0^t(A-\beta)\circ\phi_s^H(z)ds} \mu(dz). 
\end{equation}
Taking $a$ to be equal to $1$ in a neighborhood of $H^{-1}(1)$, identity \eqref{e:weak_solution} implies
\begin{equation}
\label{e:averaged_text_functions}
 e^{2\beta t} = \int_{\R^{2d}}  e^{ 2 \int_0^tA\circ\phi_s^H(z)ds}\mu(dz) , \quad \forall t \in \R ,
 \end{equation}
from which Proposition~\ref{p:known_result} follows thanks to~\eqref{average}. In the case, where $\beta=0$ and $A\geq 0$, 
one can deduce from~\eqref{e:averaged_text_functions} that
$$\forall t\in\R,\quad\text{supp}(\mu)\subset H^{-1}(1)\cap\{z:A\circ\phi_t^H(z)=0\}.$$
Hence, we can record the following useful lemma:
\begin{lemma}
\label{l:support_of_measure}
Suppose that $A\geq 0$. Let $\mu$ be a semiclassical measure associated to the sequence $(v_\hbar)_{\hbar\rightarrow 0^+}$ satisfying~\eqref{e:eigenvalue-equation} with $\beta=0$ and $r_{\hbar}=o(\hbar)$. Then
\begin{equation}
\label{e:support}
\supp \mu \subset \left \{ z \in H^{-1}(1) \, : \, \langle A \rangle(z) =0 \right \}.
\end{equation}
\end{lemma}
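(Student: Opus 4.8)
The plan is to read the conclusion off directly from identity~\eqref{e:averaged_text_functions}, which was established above in the course of proving Proposition~\ref{p:known_result} and which holds for any sequence verifying~\eqref{e:eigenvalue-equation} with $r_\hbar=o(\hbar)$. Specializing it to $\beta=0$ gives
$$1 = \int_{\R^{2d}} e^{2\int_0^t A\circ\phi_s^H(z)\,ds}\,\mu(dz),\qquad \forall\, t\in\R.$$
First I would record that, since $A\geq 0$, the continuous function $z\mapsto g_t(z):=\int_0^t A\circ\phi_s^H(z)\,ds$ satisfies $g_t\geq 0$ on $\R^{2d}$ when $t\geq 0$ and $g_t\leq 0$ on $\R^{2d}$ when $t\leq 0$; in either case $e^{2g_t}-1$ has a fixed sign on all of $\R^{2d}$.

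Next, since $\mu$ is a probability measure, the displayed identity rewrites as $\int_{\R^{2d}}\bigl(e^{2g_t}-1\bigr)\,d\mu=0$, the integrand being continuous and of constant sign. A standard argument then forces $e^{2g_t}\equiv 1$ on $\supp\mu$: if $e^{2g_t(z_0)}\neq 1$ for some $z_0\in\supp\mu$, then $e^{2g_t}-1$ keeps a strict sign on a neighborhood of $z_0$, which carries positive $\mu$-mass, contradicting that the integral vanishes. Hence $g_t(z)=0$ for every $z\in\supp\mu$ and every $t\in\R$, and since $s\mapsto A\circ\phi_s^H(z)$ is continuous and nonnegative, this forces $A\circ\phi_s^H(z)=0$ for all $s\in\R$ as soon as $z\in\supp\mu$.

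Finally I would average in time: for $z\in\supp\mu$, formula~\eqref{average} (equivalently~\eqref{average-formula}) gives $\langle A\rangle(z)=\lim_{T\to\infty}\tfrac1T\int_0^T A\circ\phi_t^H(z)\,dt=0$. Together with the fact, recalled above, that any semiclassical measure of the sequence $(v_\hbar)_{\hbar\rightarrow 0^+}$ is a probability measure carried by $H^{-1}(1)$, this yields $\supp\mu\subset\{z\in H^{-1}(1):\langle A\rangle(z)=0\}$, which is the claim. I do not expect a genuine obstacle: the only two points needing a little care are the passage from ``the $\mu$-integral of a sign-definite continuous function vanishes'' to ``the function vanishes on $\supp\mu$'', and from the vanishing of $\int_0^t A\circ\phi_s^H(z)\,ds$ to the pointwise vanishing of $A$ along the whole orbit through $z$; both are elementary. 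The real content of the lemma is already contained in~\eqref{e:averaged_text_functions}.
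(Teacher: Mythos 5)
Your proof is correct and follows exactly the route the paper indicates: it specializes identity~\eqref{e:averaged_text_functions} to $\beta=0$, uses $A\geq 0$ and the fact that $\mu$ is a probability measure to force the sign-definite integrand to equal $1$ on $\supp\mu$, and then deduces $A\circ\phi_s^H\equiv 0$ along orbits through $\supp\mu$, hence $\langle A\rangle=0$ there. The paper states this deduction in one line; you have merely filled in the (correct) elementary details.
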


\subsection{Proof of Theorem~\ref{t:main_result_1}} Let us now reproduce the same argument but suppose now that $a=\langle a\rangle$, 
implying in particular that $\{H,\langle a\rangle\}=0$. From this, we get
\begin{align*}
\big \langle [\widehat{H}_\hbar + \delta_\hbar  \widehat{V}_\hbar , \Op_\hbar^w(\langle a\rangle)] v_\hbar, v_\hbar \big \rangle_{L^2(\R^{d})} & = 
\frac{\hbar\delta_{\hbar}}{i} \big \langle \Op_\hbar^w(\{ V , \langle a\rangle \}) v_\hbar, v_\hbar \big \rangle_{L^2(\R^{d})} + O(\hbar^3).
\end{align*}
As before, recalling that $a$ is real valued, one still has
$$
\big \langle [\widehat{H}_\hbar + \delta_\hbar  \widehat{V}_\hbar , \Op_\hbar^w(\langle a\rangle)] v_\hbar, v_\hbar \big \rangle_{L^2(\R^{d})}  =  
2i \hbar \,  \big \langle  \Op_\hbar^w(\langle a\rangle(A-\beta_{\hbar})) v_\hbar , v_\hbar \big \rangle_{L^2(\R^d)}+O(\|r_{\hbar}\|)+ O(\hbar^3).$$
Hence, one gets
$$ \big \langle  \Op_\hbar^w\left(\left(2(A-\beta_{\hbar})+\delta_{\hbar}X_V\right)\langle a\rangle\right) v_\hbar , v_\hbar \big \rangle_{L^2(\R^d)}
=O(\|r_{\hbar}\|\hbar^{-1})+O(\hbar^2),$$
where $X_V$ is the Hamiltonian vector field of $V$. Suppose now that $A\geq 0$ and $\langle a\rangle\geq 0$. From the Fefferman-Phong inequality~\cite[Ch.~4]{Zw12}, one knows that there exists some constant $C>0$ such that
$$2\beta_{\hbar}\big \langle  \Op_\hbar^w\left(\langle a\rangle\right) v_\hbar , v_\hbar \big \rangle_{L^2(\R^d)}\geq 
\delta_{\hbar}\big \langle  \Op_\hbar^w\left(X_V\langle a\rangle\right) v_\hbar , v_\hbar \big \rangle_{L^2(\R^d)}-
C(\hbar^2+\|r_{\hbar}\|\hbar^{-1}),$$
where the constant $C$ depends only on $A$, $V$ and $a$. 
Now, we fix $R>0$ and we would like to show that 
$\liminf_{\hbar\rightarrow 0^+}\beta_{\hbar}/\delta_{\hbar}> R$ provided that
$\delta_{\hbar}\geq \varepsilon_R^{-1}\hbar^2$ and that $\|r_{\hbar}\| \leq \varepsilon_R  \hbar\delta_{\hbar}$ 
for some small enough $ \varepsilon_R>0$ (to be determined later on).
To that end, we proceed by contradiction and 
suppose that, up to an extraction, one has $2\frac{\beta_{\hbar}}{\delta_{\hbar}}\rightarrow c_0\in[0,2R]$ 
(in particular $\beta=0$). One finally gets after letting 
$\hbar\rightarrow 0^+$:
\begin{equation}\label{e:control-inequality-0}c_0 \mu\left(\langle a\rangle\right)
\geq\mu\left(X_V\langle a\rangle\right) - C\varepsilon_R,\end{equation}
for some $C\geq 0$ depending on $A$, $V$ and $a$. 
Using one more time Lemma~\ref{l:support_of_measure}, one can also deduce that $\mu$ is invariant by $\phi_t^H$. Hence, 
$$\mu\left(\{V,\langle a\rangle \}\right) = \mu\left(\{\langle V\rangle ,\langle a\rangle \}\right),$$
which implies
\begin{equation}\label{e:control-inequality}c_0 \mu\left(\langle a\rangle\right)
\geq\mu\left(X_{\langle V\rangle}\langle a\rangle\right) - C\varepsilon_R.\end{equation}
By our geometric control condition~\eqref{e:control-subprincipal} and 
since $H^{-1}(1) \cap \langle A \rangle^{-1}(0)$ is compact, there exist $T_1>0$ and $\varepsilon_0>0$ such that
$$\int_0^{T_1}\langle A\rangle\circ \phi_t^{\langle V\rangle}(z)dt > \varepsilon_0, \quad \forall z \in H^{-1}(1) \cap \langle A \rangle^{-1}(0),$$
where $\phi_t^{\langle V \rangle}$ is the flow generated by $X_{\langle V \rangle}$. 
Up to the fact that we may have to increase the value of $C>0$ (in a way that depends only 
on $T_1$, $A$, $V$ and $a$), 
we can suppose that~\eqref{e:control-inequality} holds uniformly for every function 
$\langle a\rangle\circ\phi_t^{\langle V\rangle}$ with $0\leq t\leq T_1$, i.e. for every $t\in[0,T_1]$,
$$
c_0 \mu\left(\langle a\rangle\circ\phi_t^{\langle V \rangle}\right)\geq\mu\left(\{\langle V \rangle, \langle a\rangle\}\circ\phi_t^{\langle V \rangle}\right) - C\varepsilon_R.
$$
This is equivalent to the fact that 
$\frac{d}{dt}\left(e^{-c_0 t}\int_{\R^{2d}}\langle a\rangle \circ\phi_t^{\langle V \rangle} d\mu\right)\leq 
C\varepsilon_Re^{-c_0t}$ for every $t\in[0,T_1]$. 
Hence, if $c_0\neq 0$, one finds that, for every $t\in [0,T_1]$,
\begin{equation}
\label{e:quasicontradiction}
\int_{\R^{2d}} \langle a \rangle \circ \phi_t^{\langle V \rangle}(z) \mu(dz) 
\leq e^{c_0 t} \int_{\R^{2d}} \langle a \rangle (z) \mu(dz) + 
\frac{C\varepsilon_R(e^{tc_0} - 1)}{c_0}.
\end{equation}
We now apply this inequality with $a=A$ and integrate over the interval $[0,T_1]$. In that way, we obtain 
$$\varepsilon_0<\int_{0}^{T_1}\int_{\R^{2d}} \langle a \rangle \circ \phi_t^{\langle V \rangle}(z) \mu(dz)dt
\leq\int_0^{T_1}\frac{C\varepsilon_R(e^{tc_0} - 1)}{c_0}dt \leq     \frac{C\varepsilon_RT_1(e^{T_1c_0} - 1)}{c_0}.$$
Observe that, for $c_0=0$, we would get the upper bound $C\varepsilon_RT_1^2.$ In both cases, this yields the 
expected contradiction by taking $\varepsilon_R$ small enough (in a way that depends only on $R$, $A$ and $V$) and it concludes the proof of Theorem~\ref{t:main_result_1}.

\begin{remark} Note that we could get the conclusion faster under the stronger geometric assumption
\begin{equation}\label{e:strong-control-condition}\forall z\in H^{-1}(1)\cap\langle A\rangle^{-1}(0),\quad \{\langle A\rangle,\langle V\rangle\}(z)\neq 0,\end{equation}
 which implies (but is not equivalent to) the geometric control condition~\eqref{e:control-subprincipal} of 
 Theorem~\ref{t:main_result_1}. Together with~\eqref{e:control-inequality}, this yields the following upper bound
 $$\mu\left(X_{\langle V\rangle}\langle A\rangle\right) \leq C\varepsilon_R.$$
 Hence, provided $\varepsilon_R>0$ is chosen small enough in a way that depends only on 
 $A$ and $V$ (but not $R$), we get a contradiction. This shows that, for a small enough choice 
 of $\varepsilon_R>0$, one has in fact $\beta_{\hbar}\gg\delta_{\hbar}$ under the geometric condition~\eqref{e:strong-control-condition}.
\end{remark}

\section{The Averaging Method}
\label{averaging_method}

From this point on of the article, we will make the assumption that
$$\delta_{\hbar}=\hbar.$$
This will slightly simplify the exposition and it should a priori be possible to extend the results provided $\hbar\leq\delta_{\hbar}$ does not go to $0$ too slowly. 
In this paragraph, we briefly recall how to perform a semiclassical averaging method in the context of nonselfadjoint operators 
following the works of Sj\"ostrand~\cite{Sjostrand00} and Hitrik~\cite{Hitrik02}. For that purpose, we define
$$\widehat{F}_{\hbar}:=\Op_\hbar^w( F_1+i F_2),$$
where $F_1$ and $F_2$ are two real valued and smooth functions on $\R^{2d}$ that will be determined later on. We make the assumption 
that all the derivatives (at every order) of $F_1$ and $F_2$ are bounded. For every $t$ in $[0,1]$, 
we set $\mathcal{F}_{\hbar}(t)=e^{it\widehat{F}_{\hbar}}$.
By \cite[Thm. III.1.3.]{Eng00}, the family $\mathcal{F}_\hbar(t)$ defines a strongly continuous group (note that $\mathcal{F}_\hbar$ is invertible) on $L^2(\R^d)$ such that
\begin{equation}
\label{e:bound_semigroup}
\Vert \mathcal{F}_\hbar(t) \Vert_{\mathcal{L}(L^2)} \leq e^{|t| \Vert\Op_\hbar^w(F_2) \Vert_{\mathcal{L}(L^2)}}.
\end{equation}
For simplicity, we shall denote $\mathcal{F}_\hbar=\mathcal{F}_\hbar(1)$ and we will study the properties of the conjugated operator
$$\widehat{Q}_{\hbar}:=\mathcal{F}_\hbar\widehat{P}_{\hbar}\mathcal{F}_\hbar^{-1},$$
for appropriate choices of $F_1$ and $F_2$. Using the conventions of~\cite[Ch.~4]{Zw12}, symbols of order $m\in\R$ are defined by
$$S\left(\langle z\rangle^m\right):=\left\{(a_{\hbar})_{0\leq\hbar\leq 1}\in\mathcal{C}^{\infty}(\R^{2d},\mathbb{C}):\ \forall\alpha\in\mathbb{N}^{2d},\ |\partial^{\alpha}a(z)|\leq 
C_{\alpha}\langle z\rangle^m\right\},$$
where $\langle z\rangle=(1+\|z\|^2)^{\frac{1}{2}}$. We shall denote by $\Psi_{\hbar}^m$ the set of all operators of the form $\Op_{\hbar}^w(a)$ with $a\in 
S\left(\langle z\rangle^m\right)$. 

\subsection{Semiclassical conjugation}

Writing the Taylor expansion, one knows that, 
for every $a$ in $S\left(\langle z\rangle^m\right)$,
\begin{equation}
\label{e:taylor_expansion}
\begin{array}{rl}
\mathcal{F}_\hbar\Op_{\hbar}^w(a)\mathcal{F}_\hbar^{-1} = & \hspace*{-0.2cm} \Op_{\hbar}^w(a)+i\left[\widehat{F}_{\hbar},\Op_{\hbar}^w(a)\right] \\[0.2cm]
  & \hspace*{-0.2cm} \displaystyle -
\int_0^1(1-t)\mathcal{F}_\hbar(t)\left[\widehat{F}_{\hbar},\left[\widehat{F}_{\hbar},\Op_{\hbar}^w(a)\right]\right]\mathcal{F}_\hbar(-t)dt.
\end{array}
\end{equation}
Observe from the composition rules for semiclassical pseudodifferential operators~\cite[Ch.~4]{Zw12} that 
$\left[\widehat{F}_{\hbar},\left[\widehat{F}_{\hbar},\Op_{\hbar}^w(a)\right]\right]$ is an element of $\hbar^2\Psi_{\hbar}^m$. Moreover, a direct extension of the 
Egorov Theorem~\cite[Th.~11.1]{Zw12} to the nonselfadjoint framework shows that the third term in the righthand side is in fact an 
element of $\hbar^2\Psi_{\hbar}^m$. Then one can verify from the composition rules for pseudodifferential operators that
$$\mathcal{F}_\hbar\Op_{\hbar}^w(a)\mathcal{F}_\hbar^{-1}=\Op_{\hbar}^w(a)+\hbar\Op_{\hbar}^w\left(\{F_1, a\}\right)+i\hbar
\Op_{\hbar}^w\left(\{F_2, a\}\right)+\hbar^2 \widehat{R}_\hbar,$$
where $\widehat{R}_\hbar$ is an element in $\Psi_{\hbar}^m$. Applying this equality to the 
operator $\widehat{P}_{\hbar}$, one finds
\begin{equation}\label{e:conjugation}
\widehat{Q}_{\hbar}=\widehat{P}_{\hbar}+\hbar\Op_{\hbar}^w\left(\{F_1, H\}\right)+i\hbar
\Op_{\hbar}^w\left(\{F_2, H\}\right)+\hbar^2 \widehat{R}_\hbar,
\end{equation}
where $\widehat{R}_\hbar$ is now an element in $\Psi_{\hbar}^2$. We now aim at 
choosing $F_1$ and $F_2$ in such a way that
\begin{equation}\label{e:cohomological}
 \{F_1,H\}+V=\langle V\rangle\quad\text{and}\quad\{F_2,H\}+A=\langle A\rangle.
\end{equation}
If we are able to do so, then we will have
\begin{equation}\label{e:conjugation2}
\mathcal{F}_\hbar\widehat{P}_{\hbar}\mathcal{F}_\hbar^{-1}=\widehat{H}_{\hbar}+\hbar\Op_{\hbar}^w\left(\langle V\rangle\right)+i\hbar
\Op_{\hbar}^w\left(\langle A\rangle\right)+\hbar^2 \widehat{R}_\hbar.
\end{equation}

\subsection{Solving cohomological equations}\label{cohomological-equations}

In order to solve cohomological-type equations like~\eqref{e:cohomological}, we need to make a few Diophantine restrictions on $\omega$.
Let $g \in \mathcal{C}^\infty(\R^{2d})$ be any smooth function such that $\langle g \rangle = 0$ and all of whose derivatives (at any order) are bounded. We look for another function 
$f \in \mathcal{C}^\infty(\R^{2d})$ all of whose derivatives (at any order) are bounded and which solves the following cohomological equation:
\begin{equation}
\label{cohomological}
\{ H , f \} = g.
\end{equation}
We then apply this result with $g=V-\langle V\rangle$ (resp. $A-\langle A\rangle$) in order to find $f=F_1$ (resp. $F_2$).

For any $f \in \mathcal{C}^\infty(\R^{2d})$ all of whose derivatives (at any order) are bounded, we can write $f\circ\Phi_{\tau}^H$ as a Fourier series in $\tau \in \mathbb{T}^d$:
\begin{equation}
\label{e:harmonic_fourier-decomposition}
f\circ\Phi_{\tau}^H(x,\xi)=\sum_{k \in \mathbb{Z}^d}f_k(x,\xi) \frac{e^{ik\cdot\tau}}{(2\pi)^d}  ,\quad  f_k(x,\xi):=\int_{\mathbb{T}^d} f \circ \Phi_{\tau}^H(x, \xi) e^{-ik \cdot \tau} d\tau.
\end{equation}
Notice that $f_k \circ \Phi_\tau^H = f_k \, e^{ik \cdot \tau}$ and that, for $\tau = 0$, $f = (2\pi)^{-d}\sum_k f_k$. Recalling~\eqref{average-formula} and the definition~\eqref{submodule_intro} of $\Lambda_{\omega}$, one has
$$
\langle f \rangle\circ\Phi_{\tau}^H(x,\xi)  =\sum_{k\in\mathbb{Z}^d}f_k(x,\xi)\left(\lim_{T\rightarrow+\infty}\frac{1}{T(2\pi)^d}\int_0^T e^{ik.(\tau+t\omega)}dt\right) = \frac{1}{(2\pi)^d}\sum_{k \in \Lambda_\omega}f_k(x,\xi) e^{ik \cdot\tau} .
$$
In particular, as $\langle g\rangle\circ\Phi_\tau^H=0$ for every $\tau\in\mathbb{T}^d$, one finds that $g_k=0$ for every $k\in\Lambda_\omega$ and thus
$$\forall \tau\in\mathbb{T}^d,\quad g\circ\Phi_{\tau}^H(x,\xi)=\frac{1}{(2\pi)^{d}} \sum_{k \in \mathbb{Z}^d \setminus \Lambda_\omega} g_k(x,\xi)e^{ik \cdot\tau}.$$ 
Observe also that, if $f$ is a solution of~\eqref{cohomological}, then so is $f + \lambda \langle f \rangle$ for any $\lambda \in \mathbb{R}$, since $\{ H, \langle f \rangle \} = 0$ thanks to~\eqref{average-formula}. 
Thus, we can try to solve the cohomological equation~\eqref{cohomological} by supposing $f\circ\Phi_{\tau}^H$ to be of the form
$$
f\circ\Phi_{\tau}^H(x,\xi) = \frac{1}{(2\pi)^{d}}\sum_{k \in \mathbb{Z}^d \setminus \Lambda_\omega} f_k(x,\xi)e^{ik \cdot\tau},
$$
and write down
$$
\{ H, f\circ\Phi_{\tau}^H \} = \frac{d}{dt} \left(f \circ \Phi_{\tau+t\omega}^H \right) \vert_{t=0} = \frac{1}{(2\pi)^{d}} \sum_{k \in \mathbb{Z}^d \setminus \Lambda_\omega} ik \cdot \omega \, f_ke^{ik \cdot\tau}.
$$
Hence, if we set
\begin{equation}
\label{solution-cohomological}
f\circ\Phi_{\tau}^H(x,\xi) = \frac{1}{(2\pi)^d} \sum_{k \in \mathbb{Z}^d \setminus \Lambda_\omega} \frac{1}{ik\cdot \omega} \, g_k(x,\xi)e^{ik \cdot\tau},
\end{equation}
then $f$ will solve~\eqref{cohomological} (at least formally).
It is not difficult to see that, unless we impose some quantitive restriction on how fast $\vert k \cdot \omega \vert^{-1}$ can grow, the solutions given formally by (\ref{solution-cohomological}) 
may fail to be even distributions -- see for instance \cite[Ex.~2.16]{Llav03}. On the other hand, 
if $\omega$ is partially Diophantine, 
and $g \in \mathcal{C}^\infty(\R^{2d})$ has all its derivatives (at any order) bounded and is such that $\langle g \rangle = 0$, then \eqref{solution-cohomological} defines a smooth solution $f \in \mathcal{C}^\infty(\R^{2d})$ 
of~\eqref{cohomological} all of whose derivatives (at any order) are bounded. As a special case, we observe that, if $\omega=(1,\ldots,1)$, then an explicit solution of~\eqref{cohomological} is given by
\begin{equation}\label{lemma-periodic-case}
f = \frac{-1}{2\pi} \int_0^{2\pi} \int_0^t  g \circ \phi_s^H  \, ds \, dt.
\end{equation}

\subsection{Proof of Theorem~\ref{t:main_result_3}}

We now turn to the proof of Theorem~\ref{t:main_result_3} and to that aim, we should exploit 
the analyticity assumptions on $A$ and $V$ in order to improve the result of Theorem \ref{t:main_result_1} when $r_{\hbar}=0$ 
in~\eqref{e:eigenvalue-equation}. It means that we are not considering anymore quasimodes but true eigenmodes. 
Hence, from this point on of the article, 
$$r_{\hbar}=0.$$
The point of using analyticity is that the symbolic calculus on the family of spaces $\mathcal{A}_s$ is extremely well 
behaved -- see appendix~\ref{a:analytic} for a brief review. This will allow us to construct a second normal form for the operator $\widehat{P}_\hbar$ via conjugation by a second operator so 
that the nonselfadjoint part of the operator is averaged by the two flows $\phi_t^H$ and $\phi_t^{\langle V \rangle}$.

Recall from~\eqref{e:conjugation2} that
\begin{equation}\label{e:averaged-op-analytic}\mathcal{F}_\hbar\widehat{P}_{\hbar}\mathcal{F}_\hbar^{-1}=\widehat{H}_{\hbar}+\hbar\Op_{\hbar}^w\left(\langle V\rangle\right)+i\hbar
\Op_{\hbar}^w\left(\langle A\rangle\right)+\hbar^2 \widehat{R}_\hbar.\end{equation}
Let us now make a few additional comments using the fact that $A$ and $V$ belong to some space $\mathcal{A}_s$. First of all, according to 
Lemma~\ref{average_analytic_norm}, we know that, as soon as $A$ and $V$ belongs to the space $\mathcal{A}_s$, both $\langle A\rangle$ 
and $\langle V\rangle$ belong\footnote{Recall also that $\mathcal{A}_s\subset S(1)$.} to the space $\mathcal{A}_s$. Moreover 
the functions $F_1$ and $F_2$ used to define $\mathcal{F}_{\hbar}$ are constructed from $A$ and $V$ using~\eqref{solution-cohomological}.
In particular, by~\eqref{diophantine} and for every $0<\sigma < \rho$, the following inequalites hold:
$$\Vert F_1 \Vert_s \leq \Vert F_1 \Vert_{\rho - \sigma,s}\lesssim_{\rho,s}\Vert 
V\Vert_{\rho,s},\quad\text{and}\quad\Vert F_2 \Vert_s \leq \Vert F_2 \Vert_{\rho - \sigma,s}\lesssim_{\rho,s}\Vert A\Vert_{\rho,s}.$$

We can make use of this regularity information to analyse the regularity of the remainder term $\widehat{R}_\hbar$ in~\eqref{e:averaged-op-analytic}. Recall that part of this 
term comes from the remainder term when we apply the composition formula to $[\Op_{\hbar}^w(A),\Op_{\hbar}^w(F_j)]$ and to $[\Op_{\hbar}^w(V),\Op_{\hbar}^w(F_j)]$ 
for $j=1,2$. In that case, Lemma~\ref{moyal_minus_poisson} from the appendix tells us that the remainder is a pseudodifferential operator whose symbol belongs 
to $\mathcal{A}_{s-\sigma}$ for every $0<\sigma<s$. There is another contribution coming from the integral term in the Taylor formula \eqref{e:taylor_expansion} with 
$\Op_{\hbar}^w(a)$ replaced by $\widehat{P}_{\hbar}$. For that term, we first make use of Lemma~\ref{moyal_minus_poisson} and of the fact that $F_j$ solve 
cohomological equations\footnote{This comment is to handle the contribution coming from $\widehat{H}_{\hbar}$.} \eqref{e:cohomological} in order to verify that the double bracket is a pseudodifferential operator whose symbol belongs to $\mathcal{A}_{s-\sigma}$ for every 
$0<\sigma<s$. Then, an application of the analytic Egorov Lemma from the appendix (point~(1) of Lemma~\ref{analytic_egorov} with $G=\hbar (F_1+iF_2)$) 
shows that this remainder term is still 
a pseudodifferential operator whose symbol now belongs to $\mathcal{A}_{s-\sigma}$ for every $0<\sigma<s$. To summarize, we have verified that 
$\widehat{R}_{\hbar}=\Op_\hbar^w(R_{\hbar})$ with $\|R_{\hbar}\|_{s-\sigma}\leq C_{s,\sigma,\rho}$ for every $0<\sigma<s$ and 
uniformly for $0<\hbar\leq \hbar_0$.

We now perform a second conjugation whose effect will be to replace $\langle A\rangle$ in~\eqref{e:averaged-op-analytic} 
by a term involving $V$. Let $F_3$ be some real valued element 
in $\mathcal{A}_{s-\sigma}$ for some $0 < \sigma < s$ verifying $\langle F_3\rangle=F_3$. We set, for $\varepsilon>0$ small enough (independent of $\hbar$),
$$
\tilde{\mathcal{F}}_\hbar(t) := e^{ \frac{t}{\hbar} \widehat{F}_{3,\hbar}}, \quad t \in [-\varepsilon,\varepsilon],
$$
where $\widehat{F}_{3,\hbar}=\Op_{\hbar}^w(\langle F_3\rangle).$
We can define the new conjugate of $\widehat{H}_\hbar$:
$$\tilde{\mathcal{F}}_\hbar(-\varepsilon)\mathcal{F}_\hbar\widehat{P}_{\hbar}\mathcal{F}_\hbar^{-1}\tilde{\mathcal{F}}_\hbar(\varepsilon)=\widehat{H}_{\hbar}+
\hbar\tilde{\mathcal{F}}_\hbar(-\varepsilon)\left(
\Op_{\hbar}^w\left(\langle V\rangle\right)+i
\Op_{\hbar}^w\left(\langle A\rangle\right)+\hbar \widehat{R}_\hbar\right)\tilde{\mathcal{F}}_\hbar(\varepsilon),$$
where we used that $[\widehat{H}_{\hbar},\Op_\hbar^w(\langle F_3\rangle)]=0$. In fact, as $H$ is quadratic in $(x,\xi)$ and as we used the Weyl-quantization, 
the fact that $H$ and $\langle F_3\rangle$ (Poisson-)commute implies that $[\widehat{H}_{\hbar},\Op_\hbar^w(\langle F_3\rangle)]=0$. 
Suppose now that  $\varepsilon\|\langle F_3\rangle\|_{s-\sigma}\leq\frac{\sigma^2}{2}$ so that we can use the (analytic) Egorov Lemma~\ref{analytic_egorov} with $G=iF_3$. 
This tells us that \begin{equation}\label{e:remainder-analytic}\tilde{\mathcal{F}}_\hbar(-\varepsilon)\widehat{R}_\hbar\tilde{\mathcal{F}}_\hbar(\varepsilon)=
\Op_{\hbar}^w(R_{\hbar}(\varepsilon)),
\end{equation}
with $R_{\hbar}(\varepsilon)$ belonging to $\mathcal{A}_{s-\sigma}$ uniformly for $\hbar$ small enough.
Using the conventions of Appendix~\ref{a:analytic}, one also has
\begin{equation}\label{e:leading-analytic}\tilde{\mathcal{F}}_\hbar(\varepsilon)\left(
\Op_{\hbar}^w\left(\langle V\rangle\right)+i
\Op_{\hbar}^w\left(\langle A\rangle\right)\right)\tilde{\mathcal{F}}_\hbar(-\varepsilon)=\Op_{\hbar}^w\left(\Psi_{\varepsilon}^{iF_3,\hbar}(\langle V\rangle+i\langle A\rangle) \right).\end{equation}

Consider now a sequence $(\lambda_{\hbar}=\alpha_{\hbar}+i\hbar\beta_\hbar)_{0<\hbar\leq 1}$ solving~\eqref{e:eigenvalue-equation} 
with $r_{\hbar}=0$ and $\beta_\hbar\rightarrow\beta$. In particular, one can find a sequence of 
normalized eigenvectors $(\tilde{v}_{\hbar})_{0<\hbar\leq 1}$ such that
$$\tilde{\mathcal{F}}_\hbar(-\varepsilon)\mathcal{F}_\hbar\widehat{P}_{\hbar}\mathcal{F}_\hbar^{-1}\tilde{\mathcal{F}}_\hbar(\varepsilon)\tilde{v}_{\hbar}=
\lambda_{\hbar}\tilde{v}_{\hbar}.$$
Implementing~\eqref{e:remainder-analytic} and~\eqref{e:leading-analytic}, one obtains
$$\text{Im}\left\langle\Op_{\hbar}^w\left(\Psi_{\varepsilon}^{iF_3,\hbar}(\langle V\rangle+i\langle A\rangle)\right)\tilde{v}_{\hbar},\tilde{v}_{\hbar}\right\rangle
+O(\hbar)=\frac{1}{\hbar}\text{Im}\left\langle\tilde{\mathcal{F}}_\hbar(-\varepsilon)\mathcal{F}_\hbar\widehat{P}_{\hbar}\mathcal{F}_\hbar^{-1}\tilde{\mathcal{F}}_\hbar(\varepsilon)\tilde{v}_{\hbar}
,\tilde{v}_{\hbar}\right\rangle=\beta_{\hbar}.$$
From point~$(3)$ of Lemma~\ref{analytic_egorov}, one then finds
$$\beta_{\hbar}=\left\langle\Op_{\hbar}^w\left(\langle A\rangle-\varepsilon\{ \langle F_3\rangle,\langle V\rangle\}\right)\tilde{v}_{\hbar},\tilde{v}_{\hbar}\right\rangle
+O(\varepsilon^2)+O(\hbar).$$
Up to another extraction, we 
can suppose that the sequence $(\tilde{v}_{\hbar})_{\hbar>0}$ has an unique semiclassical measure $\tilde{\mu}$ which is still 
a probability measure carried by $H^{-1}(1)$. Letting $\hbar\rightarrow 0^+$, one finds
$$\beta=\tilde{\mu}\left(\langle A\rangle+\varepsilon\{ \langle V\rangle,\langle F_3\rangle\}\right)+O(\varepsilon^2).$$
Given $0 < \sigma < s$, suppose now that we can pick $F_3$ in $\mathcal{A}_{s-\sigma}$ 
such that $\{\langle F_3\rangle,\langle V\rangle\}<0$ on $\langle A\rangle^{-1}(0)\cap H^{-1}(1)$. Then, one can find some $c_0>0$  such that $c_0\varepsilon+O(\varepsilon^2)\leq\beta$. In particular, $\beta$ cannot be 
taken equal to $0$ 
which concludes the proof of Theorem~\ref{t:main_result_3} except for the proof of the existence of $F_3$.

Let us now show that the geometric control assumption~\eqref{e:control-subprincipal} of Theorem~\ref{t:main_result_3} implies 
the existence of $F_3$. Since $\langle A \rangle$ and $\langle V \rangle$ belong to $\mathcal{A}_s$, 
Remark~\ref{r:classic_remark} from the Appendix and the compactness 
of the set $H^{-1}(1) \cap \langle A \rangle^{-1}(0)$ show that, for every $0<\sigma<s$, there exists some small enough $t_0>0$ such that
$$F_3(z):=\int_0^{t_0}\left(\int_0^t\langle A\rangle\circ \phi^{\langle V\rangle}_{\tau}(z)d\tau\right)dt$$
belongs to $\mathcal{A}_{s-\sigma}.$ One has, for every $z\in H^{-1}(1)\cap\langle A\rangle^{-1}(0)$,
$$\left\{\langle V\rangle,F_3\right\}(z)=
\int_0^{t_0}\langle A\rangle\circ \phi^{\langle V\rangle}_{t}(z)dt.$$
It remains to verify that this quantity is positive for every $z_0$ in $\langle A\rangle^{-1}(0)\cap H^{-1}(1)$. Still 
using Remark~\ref{r:classic_remark}, one has the following analytic expansion:
\begin{equation}\label{e:analytic_expansion}
\langle A\rangle\circ \phi^{\langle V\rangle}_{t}(z) = \sum_{j=0}^\infty \frac{t^j}{j!} \Ad_{\langle V \rangle}^j\big( \langle A \rangle \big)(z), 
\end{equation}
uniformly for $t \in [-t_0, t_0]$ and $z \in H^{-1}(1)$. This implies that, if we fix some $z_0$ in $H^{-1}(1)$, then the map 
$t\mapsto\langle A\rangle\circ \phi^{\langle V\rangle}_{t}(z)$ is analytic on $\R$. Now, given some $z_0\in
\langle A\rangle^{-1}(0)\cap H^{-1}(1)$, there exists some $z_1$ in the orbit of $z_0$ such that 
$\langle A\rangle(z_1)>0$ thanks to our geometric control assumption\eqref{e:control-subprincipal}. In particular, the analytic map $t\mapsto\langle A\rangle\circ \phi^{\langle V\rangle}_{t}(z_0)$ 
is nonconstant and there exists some $j\geq 1$ such that 
$ \Ad_{\langle V \rangle}^j\big( \langle A \rangle \big)(z_0)\neq 0$. 
Hence, $\left\{\langle V\rangle,F_3\right\}(z_0)>0$ which concludes the proof.

\appendix

\section{Symbolic Calculus on the spaces $ \mathcal{A}_s$}\label{a:analytic}

We collect some basic lemmas about the quantization of the spaces $\mathcal{A}_s$. We fix $s>0$ all along this appendix. 
Let $a,b \in \mathcal{A}_s$. The operator given by the composition $\Op_\hbar^w(a) \Op_\hbar^w(b)$ 
is another pseudodifferential operator with symbol $c$ given by the Moyal product $c = a \sharp_\hbar b$, which can be 
written by the following integral formula  \cite[Ch.~7, p.~79]{Dim_Sjo99}:
\begin{equation}
\label{moyal_product}
c(z) = a \sharp_\hbar b(z) = \frac{1}{(2\pi)^{4d}} \int_{\R^{4d}} \widehat{a}(w^*)\widehat{b}(z^*-w^*) e^{\frac{i \hbar}{2} \varsigma(w^*,z^*-w^*)} e^{i z^* \cdot z} dw^* \, dz^*,
\end{equation}
where $\varsigma(x,\xi,y,\eta) := \xi \cdot y - x \cdot \eta$ is the standard symplectic product and where
$$\widehat{a}(w):=\int_{\mathbb{R}^{2d}}e^{-iw \cdot z}a(z)dz.$$
We set $[a,b]_\hbar := a \sharp_\hbar b - b \sharp_\hbar a.$ Given now $ a, G \in \mathcal{A}_s$, the following conjugation formula holds formally:
$$
e^{i\frac{t}{\hbar} \Op_\hbar^w(G)} \Op_\hbar^w(a) e^{-i\frac{t}{\hbar} \Op_\hbar^w(G)} = \Op_\hbar^w(\Psi_t^{G,\hbar} a),
$$
where
\begin{equation}
\label{analytic_conjugation}
\Psi_t^{G,\hbar} a := \sum_{j=0}^\infty \frac{1}{j!} \left( \frac{it}{\hbar} \right)^j \Ad_G^{\sharp_\hbar,j}(a), \quad t \in \R,
\end{equation}
and
$$
\Ad_G^{\sharp_\hbar,j}(a) =  [G, \Ad_G^{\sharp_\hbar,j-1}(a)]_\hbar, \quad \Ad_G^{\sharp_\hbar,0}(a) = a.
$$
One of the aim of this appendix is to prove the following analytic version of Egorov's theorem:

\begin{lemma}[Analytic Egorov's Lemma]
\label{analytic_egorov}
Let $0 < \sigma < s$. Consider the family of Fourier integral operators $\{ \mathcal{G}_\hbar(t) : t \in \R \}$ defined by
$$
\mathcal{G}_\hbar(t) := e^{-\frac{it}{\hbar}\widehat{G}_{\hbar} }, 
$$
where $\widehat{G}_{\hbar} = \Op_\hbar^w(G)$ for some $G\in\mathcal{A}_s$. Assume
\begin{equation}
\label{c:small_condition}
  |t|<\frac{ \sigma^2}{ 2\Vert G \Vert_s }.
\end{equation}
Then, there exists a constant $C_{\sigma}>0$ (depending only on $\sigma$) such that, for every $a \in \mathcal{A}_s$,
\medskip

\begin{enumerate}
\item  $\Psi_t^{G,\hbar} a \in \mathcal{A}_{s-\sigma}$;
\medskip

\item 
$\displaystyle\big \Vert \Psi_t^{G,\hbar}a - a \big \Vert_{s-\sigma} \leq  C_{\sigma}\vert t \vert \Vert G \Vert_s 
  \Vert a \Vert_{s}$;

\medskip

\item $\Vert \Psi_t^{G,\hbar} a - a+t\{G,a\}\Vert_{s-\sigma}\leq C_{\sigma}|t|^2\|G\|_s\|a\|_s$ for some $C_{\sigma}>0$ depending only on $\sigma$. 
\end{enumerate}
\end{lemma}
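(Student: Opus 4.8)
The plan is to reduce the whole statement to a single quantitative estimate for the Moyal bracket on the scale of spaces $\mathcal{A}_s$, and then to sum the exponential series \eqref{analytic_conjugation} by hand; the hypothesis \eqref{c:small_condition} is precisely the threshold that makes this series converge geometrically. First I would start from the integral representation \eqref{moyal_product} and note that
\[
\widehat{[a,b]_\hbar}(z^*)=\frac{1}{(2\pi)^{2d}}\int_{\R^{2d}}\widehat{a}(w^*)\,\widehat{b}(z^*-w^*)\,2i\sin\!\Big(\tfrac{\hbar}{2}\varsigma(w^*,z^*-w^*)\Big)\,dw^*,
\]
thanks to $\varsigma(z^*-w^*,w^*)=-\varsigma(w^*,z^*-w^*)$. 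Using $|\sin u|\le|u|$, the symplectic Cauchy--Schwarz bound $|\varsigma(w^*,v^*)|\le\|w^*\|\|v^*\|$, the triangle inequality $\|z^*\|\le\|w^*\|+\|z^*-w^*\|$ inside the exponential weight, and the elementary estimate $r\,e^{-\alpha r}\le(e\alpha)^{-1}$ applied to each of the two polynomial factors \emph{separately}, this gives
\[
\|[a,b]_\hbar\|_{s'}\;\le\;\frac{\hbar}{e^{2}(2\pi)^{2d}}\cdot\frac{1}{(s_1-s')(s_2-s')}\,\|a\|_{s_1}\|b\|_{s_2},\qquad 0<s'<\min(s_1,s_2),
\]
the crucial point being that the two losses of analyticity radius may be allocated independently. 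Repeating the computation with $|\sin u-u|\le|u|^{3}/6$ in place of $|\sin u|\le|u|$ yields the Moyal--Poisson comparison $\big\|[a,b]_\hbar-\tfrac{\hbar}{i}\{a,b\}\big\|_{s-\sigma}\le C_\sigma\,\hbar^{3}\|a\|_s\|b\|_s$.

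Next I would iterate the bracket estimate along the intermediate radii $s_l:=s-l\sigma/j$, $l=0,\dots,j$. Since $G$ always lives in $\mathcal{A}_s$, at the $l$-th step I may spend the full remaining budget $s-s_l=l\sigma/j$ on the $G$-factor and only $s_{l-1}-s_l=\sigma/j$ on the intermediate symbol $\Ad_G^{\sharp_\hbar,l-1}(a)\in\mathcal{A}_{s_{l-1}}$; this produces a gain $1/l$ at step $l$, so that telescoping gives
\[
\big\|\Ad_G^{\sharp_\hbar,j}(a)\big\|_{s-\sigma}\;\le\;\frac{1}{j!}\Big(\frac{\hbar\,j^{2}\,\|G\|_s}{e^{2}(2\pi)^{2d}\sigma^{2}}\Big)^{\!j}\|a\|_s.
\]
Plugging this into \eqref{analytic_conjugation} and using the Stirling bound $j^{2j}\le e^{2j}(j!)^{2}$ gives
\[
\big\|\Psi_t^{G,\hbar}a\big\|_{s-\sigma}\;\le\;\|a\|_s\sum_{j\ge0}\Big(\frac{|t|\,\|G\|_s}{(2\pi)^{2d}\sigma^{2}}\Big)^{\!j},
\]
which under \eqref{c:small_condition}, and since $(2\pi)^{2d}\ge2$, is a geometric series of ratio $<\tfrac12$. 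This proves (1) with the bound $\|\Psi_t^{G,\hbar}a\|_{s-\sigma}\le2\|a\|_s$; for (2) I would discard the $j=0$ term, factor one power of $|t|\,\|G\|_s/\sigma^{2}$ out of the tail, and bound the remaining (still geometric) series by a constant depending only on $\sigma$ and $d$.

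For (3) I would discard both the $j=0$ and the $j=1$ terms: the tail $\sum_{j\ge2}$ is handled by the same geometric argument and is $O(|t|^{2}\|G\|_s^{2}\|a\|_s)$ with implied constant depending only on $\sigma,d$, while the $j=1$ term $\tfrac{it}{\hbar}[G,a]_\hbar$ coincides, up to an error $\le C_\sigma\,\hbar^{2}|t|\,\|G\|_s\|a\|_s$ from the Moyal--Poisson comparison above, with the first-order term $\pm t\{G,a\}$; combining these and using \eqref{c:small_condition} to absorb a power of $\|G\|_s$ gives the estimate of (3). The genuinely delicate point is the combinatorics of the nested brackets: a naive equidistribution of the $\sigma$-budget would leave $j^{2j}/j!$, which still diverges when weighted by the $1/j!$ of the exponential series, whereas the asymmetric allocation above replaces it by $j^{2j}/(j!)^{2}$, which Stirling turns into a convergent geometric series precisely under the sharp threshold \eqref{c:small_condition}. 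Everything else is then routine --- in particular, the passage from the formal identity $e^{i\frac{t}{\hbar}\Op_\hbar^w(G)}\Op_\hbar^w(a)e^{-i\frac{t}{\hbar}\Op_\hbar^w(G)}=\Op_\hbar^w(\Psi_t^{G,\hbar}a)$ to a rigorous one, obtained by differentiating in $t$ and using \eqref{e:calderon_vaillancourt_for_analytic} to control operator norms along the way.
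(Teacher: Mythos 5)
Your proof is correct and follows essentially the same approach as the paper: you establish the one-step Moyal bracket estimate with asymmetric analyticity loss, iterate it along the intermediate radii $s-l\sigma/j$ while always keeping $G$ at the full radius $s$ so that the $l$-th step gains a factor $1/l$, tame the resulting $j^{2j}/(j!)^2$ growth with Stirling's formula, sum the geometric series under the threshold \eqref{c:small_condition}, and isolate the $j=1$ term via the Moyal--Poisson comparison to obtain part~(3). Your constants are slightly sharper than the paper's (you retain the $(2\pi)^{-2d}$ Fourier normalization and use $\vert\varsigma(w,v)\vert\le\Vert w\Vert\,\Vert v\Vert$ in place of $2\Vert w\Vert\,\Vert v\Vert$), but this is purely cosmetic and does not change the structure of the argument.
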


\begin{remark}
\label{r:classic_remark}
With the hypothesis of Lemma \ref{analytic_egorov}, one also 
has that $a \circ \phi_t^G \in \mathcal{A}_{s-\sigma}$. To see this, 
it is enough to follow verbatim the proof of Lemma \ref{analytic_egorov} 
noting that Lemma \ref{analytic_conmutator} below remains valid for $-i\hbar \{a,b \}$ 
instead of $[a,b]_\hbar$ and then using the formal expansion
$$
a \circ \phi_t^G = \sum_{j=0}^\infty \frac{t}{j!} \Ad_G^j(a), 
$$
where $\Ad_G^j(a) = \{ G, \Ad_G^{j-1}(a) \}$ and $\Ad_G^0(a) = a$ instead of the analogous quantities for $\Psi_t^{G,\hbar}a$.
\end{remark}

\subsection{Preliminary lemmas}

Before proceeding to the proof, we start with some preli\-minary results.

\begin{lemma}
\label{l:multiplicative_property}
For every $a, b \in \mathcal{A}_s$, the following holds:
$$
\Vert a b \Vert_s \leq \Vert a \Vert_s \Vert b \Vert_s.
$$
\end{lemma}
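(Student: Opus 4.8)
The plan is to reduce the claim to the convolution theorem together with the submultiplicativity of the weight $w\mapsto e^{s\|w\|}$ on $\R^{2d}$. First I would record a preliminary regularity fact: any $a\in\mathcal{A}_s$ lies not only in $L^1(\R^{2d})$ (by definition) but also in $L^\infty(\R^{2d})$, since Fourier inversion gives $a(z)=(2\pi)^{-2d}\int_{\R^{2d}}\widehat a(w)e^{iw\cdot z}\,dw$ as an absolutely convergent integral, whence $\|a\|_{L^\infty}\leq(2\pi)^{-2d}\|a\|_s$. Consequently, for $a,b\in\mathcal{A}_s$ one has $ab\in L^1(\R^{2d})$, so $\widehat{ab}$ is well defined by the integral formula of Appendix~\ref{a:analytic}; inserting the inversion formula for $b$ and applying Fubini (legitimate because $\widehat a,\widehat b\in L^1$) yields the convolution identity
\[
\widehat{ab}(w)=\frac{1}{(2\pi)^{2d}}\int_{\R^{2d}}\widehat a(w_1)\,\widehat b(w-w_1)\,dw_1.
\]

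Then the estimate is essentially immediate. Using the definition of $\|\cdot\|_s$, moving the modulus inside the convolution integral, and bounding $e^{s\|w\|}\leq e^{s\|w_1\|}e^{s\|w-w_1\|}$ by the triangle inequality $\|w\|\leq\|w_1\|+\|w-w_1\|$, Tonelli's theorem and the substitution $w_2=w-w_1$ give
\[
\|ab\|_s\leq\frac{1}{(2\pi)^{2d}}\Bigl(\int_{\R^{2d}}|\widehat a(w_1)|e^{s\|w_1\|}\,dw_1\Bigr)\Bigl(\int_{\R^{2d}}|\widehat b(w_2)|e^{s\|w_2\|}\,dw_2\Bigr)=\frac{1}{(2\pi)^{2d}}\|a\|_s\|b\|_s\leq\|a\|_s\|b\|_s.
\]
In particular $ab\in\mathcal{A}_s$, so $\|ab\|_s$ is finite and the statement is meaningful.

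There is no genuine obstacle: the argument is a routine computation. The only points that deserve a word of care are the convolution identity for $\widehat{ab}$ and the interchange of integrals, both of which follow at once from the absolute integrability of $\widehat a$ and $\widehat b$ that is built into the definition of $\mathcal{A}_s$. Conceptually, the lemma merely expresses that the Fourier transform identifies $\mathcal{A}_s$ with the convolution Banach algebra $L^1\bigl(\R^{2d},e^{s\|w\|}\,dw\bigr)$, whose submultiplicativity is inherited from that of the weight; the same remark will be used implicitly whenever the Moyal product \eqref{moyal_product} is estimated on these spaces.
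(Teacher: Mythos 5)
Your proof is correct and follows essentially the same route as the paper's: both reduce the claim to the convolution identity for $\widehat{ab}$ and then dominate the weight via the triangle inequality $\|w\|\leq\|w_1\|+\|w-w_1\|$. The only (harmless) difference is that you keep track of the normalization constant $(2\pi)^{-2d}$ in the convolution formula, which actually yields the slightly stronger bound $\Vert ab\Vert_s\leq(2\pi)^{-2d}\Vert a\Vert_s\Vert b\Vert_s$, whereas the paper drops this factor; since the factor is $\leq 1$, the stated inequality holds in either case.
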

\begin{proof}
To see this, write
\begin{align*}
\Vert a b \Vert_s & = \int_{\R^{2d}} \vert \widehat{ ab }(w) \vert e^{s \vert w \vert} dw \\[0.2cm]
 & = \int_{\R^{2d}} \left \vert \int_{\R^{2d}} \widehat{a}(w - w^*) \widehat{b}(w^*) dw^* \right \vert e^{s \vert w \vert} dw \\[0.2cm]
 & \leq \int_{\R^{2d}} \int_{\R^{2d}} \vert \widehat{a}(w-w^*) \vert e^{s \vert w - w^* \vert} \vert \widehat{b} (w^*) \vert e^{s \vert w^* \vert} dw^*\, dw \\[0.2cm]
 &\leq \Vert a \Vert_s \Vert b \Vert_s.
\end{align*}
\end{proof}

We shall also need some estimates on the Moyal product of elements in $\mathcal{A}_s$:
\begin{lemma}
\label{analytic_conmutator}
Let $a,b \in \mathcal{A}_s$. Then, for every $ 0 < \sigma_1 + \sigma_2 < s$, $[a,b]_\hbar \in \mathcal{A}_{s-\sigma_1-\sigma_2}$ and
$$
\left\Vert [a,b]_\hbar \right\Vert_{s-\sigma_1 - \sigma_2} \leq \frac{2\hbar}{ e^2 \sigma_1 (\sigma_1 + \sigma_2)} \Vert a \Vert_s \Vert b \Vert_{s-\sigma_2}.
$$
\end{lemma}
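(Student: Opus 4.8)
\textbf{Proof strategy for Lemma~\ref{analytic_conmutator}.}

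The plan is to work entirely on the Fourier side, using the integral formula~\eqref{moyal_product} for the Moyal product together with the elementary bound $\|a\|_s = \int |\widehat a(w)| e^{s|w|}\,dw$. First I would write $[a,b]_\hbar(z)$ using~\eqref{moyal_product} for both $a\sharp_\hbar b$ and $b\sharp_\hbar a$; after the change of variables swapping $w^*\leftrightarrow z^*-w^*$ in the second term, the two contributions combine into a single integral whose kernel carries the factor
$$
\bigl(e^{\frac{i\hbar}{2}\varsigma(w^*,z^*-w^*)} - e^{-\frac{i\hbar}{2}\varsigma(w^*,z^*-w^*)}\bigr) = 2i\sin\!\Bigl(\tfrac{\hbar}{2}\varsigma(w^*,z^*-w^*)\Bigr).
$$
Taking the Fourier transform in $z$ (which just removes the $e^{iz^*\cdot z}$), one gets
$$
\widehat{[a,b]_\hbar}(z^*) = \frac{2i}{(2\pi)^{2d}}\int_{\R^{2d}} \widehat a(w^*)\,\widehat b(z^*-w^*)\,\sin\!\Bigl(\tfrac{\hbar}{2}\varsigma(w^*,z^*-w^*)\Bigr)\,dw^*.
$$
(I would be a little careful about the normalization constant in~\eqref{moyal_product} versus the convention for $\widehat{\,\cdot\,}$ used in the norm, but this only affects the universal constant.)

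Next I would insert the factor $e^{(s-\sigma_1-\sigma_2)|z^*|}$ demanded by the target norm and integrate in $z^*$. Writing $z^* = w^* + (z^*-w^*)$ and using the triangle inequality $|z^*|\le |w^*| + |z^*-w^*|$, the weight splits as $e^{(s-\sigma_1-\sigma_2)|w^*|}e^{(s-\sigma_1-\sigma_2)|z^*-w^*|}$, up to throwing away the extra decay; to recover $\|a\|_s$ and $\|b\|_{s-\sigma_2}$ I want the weights $e^{s|w^*|}$ and $e^{(s-\sigma_2)|z^*-w^*|}$ respectively. Thus I need to absorb the missing exponential factors $e^{(\sigma_1+\sigma_2)|w^*|}$ and $e^{\sigma_1|z^*-w^*|}$ into the smallness coming from $\sin$. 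The key pointwise estimate is $|\sin x|\le |x|$, giving
$$
\Bigl|\sin\!\Bigl(\tfrac{\hbar}{2}\varsigma(w^*,z^*-w^*)\Bigr)\Bigr| \;\le\; \tfrac{\hbar}{2}\,|\varsigma(w^*,z^*-w^*)| \;\le\; \tfrac{\hbar}{2}\,|w^*|\,|z^*-w^*|,
$$
since $|\varsigma(u,v)|\le |u|\,|v|$. So it remains to bound $|w^*|\,|z^*-w^*|$ by $C\,e^{(\sigma_1+\sigma_2)|w^*|}e^{\sigma_1|z^*-w^*|}$: this follows from the elementary inequality $r\le \frac{1}{e\lambda}e^{\lambda r}$ for $r,\lambda>0$ (maximize $re^{-\lambda r}$), applied with $r=|w^*|$, $\lambda = \sigma_1+\sigma_2$ and with $r=|z^*-w^*|$, $\lambda=\sigma_1$. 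This produces the factor $\frac{1}{e(\sigma_1+\sigma_2)}\cdot\frac{1}{e\sigma_1} = \frac{1}{e^2\sigma_1(\sigma_1+\sigma_2)}$, matching the claimed constant (together with the leading $2\hbar$ from the $\sin$ bound and the constant $2$). After these pointwise bounds, Fubini and the definition of the norms give
$$
\bigl\|[a,b]_\hbar\bigr\|_{s-\sigma_1-\sigma_2} \le \frac{2\hbar}{e^2\sigma_1(\sigma_1+\sigma_2)}\int\!\!\int |\widehat a(w^*)|e^{s|w^*|}|\widehat b(z^*-w^*)|e^{(s-\sigma_2)|z^*-w^*|}\,dw^*dz^*,
$$
and the inner integral factorizes as $\|a\|_s\|b\|_{s-\sigma_2}$ after substituting $v^* = z^*-w^*$.

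The main obstacle, if any, is bookkeeping rather than a genuine difficulty: one must keep the normalization constants from~\eqref{moyal_product} consistent with the Fourier convention used to define $\|\cdot\|_s$, and one must split the three "missing" exponential weights in the right proportions so that exactly the stated constant comes out. The only real ideas are the two elementary inequalities $|\sin x|\le|x|$ and $\sup_{r\ge0} re^{-\lambda r} = (e\lambda)^{-1}$, plus the subadditivity $|z^*|\le|w^*|+|z^*-w^*|$; everything else is Fubini. Note also that the same computation with $\sin$ replaced by its argument (i.e. replacing $[a,b]_\hbar$ by $-i\hbar\{a,b\}$, whose Fourier transform is exactly $\frac{1}{(2\pi)^{2d}}\int \widehat a(w^*)\widehat b(z^*-w^*)\,\hbar\,\varsigma(w^*,z^*-w^*)\,dw^*$ up to constants) yields the analogous bound, which is the remark needed in Remark~\ref{r:classic_remark}.
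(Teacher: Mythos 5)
Your proposal is correct and follows essentially the same route as the paper: rewrite $[a,b]_\hbar$ on the Fourier side as a single integral with the $2i\sin\!\bigl(\tfrac{\hbar}{2}\varsigma\bigr)$ kernel, use $|\sin x|\le|x|$ together with a product bound on $\varsigma$, split the weight $e^{(s-\sigma_1-\sigma_2)|z^*|}$ via the triangle inequality, absorb the two polynomial factors with $\sup_{r\ge 0} r e^{-\lambda r}=(e\lambda)^{-1}$, and finish with Fubini. One small bookkeeping point: you use the sharp bound $|\varsigma(u,v)|\le|u||v|$ (Cauchy--Schwarz), which actually yields the constant $\tfrac{\hbar}{e^2\sigma_1(\sigma_1+\sigma_2)}$, a factor $2$ better than the stated one; the paper instead uses the cruder $|\varsigma(w^*,z^*-w^*)|\le 2|w^*||z^*-w^*|$, which is exactly where the extra $2$ in the lemma's constant comes from. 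Since the lemma only asserts an upper bound, your version is fine (indeed slightly sharper), but the phrase ``matching the claimed constant'' is not quite right.
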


\begin{proof}
From (\ref{moyal_product}), we have
$$
[a,b]_\hbar(z) = 2i \int_{\R^{4d}}  \widehat{a}(w^*)\widehat{b}(z^*-w^*) \sin \left( \frac{\hbar}{2} \, \varsigma(w^*,z^*-w^*) \right) \frac{e^{i z^* \cdot z}}{(2\pi)^{4d}} dw^* \, dz^*.
$$
Then, using that
\begin{equation}
\label{e:estimate_simplectic_product}
\vert \varsigma(w^*,z^*-w^*) \vert \leq 2 \vert w^* \vert \vert z^* - w^* \vert,
\end{equation}
we obtain:
\begin{align*}
\Vert [a,b]_\hbar \Vert_{s- \sigma_1 - \sigma_2} & \\[0.2cm]
 & \hspace*{-1cm} \leq \frac{2\hbar}{(2\pi)^{4d}} \int_{\R^{4d}} \vert \widehat{a}(w^*)\vert \vert w^* \vert \vert \widehat{b}(z^*-w^*) \vert \vert z^* - w^* \vert e^{(s-\sigma_1 - \sigma_2)( \vert z^* - w^* \vert + \vert w^* \vert)}  dw^* \, dz^* \\[0.2cm]
 &  \hspace*{-1cm}  \leq \frac{2\hbar}{(2\pi)^{4d}} \big( \sup_{r \geq 0} r e^{-\sigma_1 r} \big)\big( \sup_{r \geq 0} r e^{-(\sigma_1+\sigma_2) r} \big) \Vert a \Vert_s \Vert b \Vert_{s-\sigma_2} \\[0.2cm]
 &  \hspace*{-1cm} \leq \frac{2\hbar}{ e^2 \sigma_1(\sigma_1 + \sigma_2)} \Vert a \Vert_s \Vert b \Vert_{s-\sigma_2}.
\end{align*}
\end{proof}

Finally, one has:

\begin{lemma}
\label{moyal_minus_poisson}
Let $a,b \in \mathcal{A}_{s}$ and $0 < \sigma < s$. Then there exists a contant $C_\sigma > 0$ depending only on $\sigma$ such that
 
\begin{equation}
\label{e:commutator_poisson}
\left\Vert \frac{i}{\hbar} [a,b]_\hbar - \{ a, b \} \right\Vert_{s-\sigma} \leq C_\sigma\,  \hbar^2 \Vert a \Vert_{s} \Vert b \Vert_{s-\sigma}.
\end{equation}

\end{lemma}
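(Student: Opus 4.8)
The plan is to compare $\frac{i}{\hbar}[a,b]_\hbar$ with $\{a,b\}$ directly at the level of their oscillatory‑integral representations. Starting from~\eqref{moyal_product} and using the antisymmetry of $\varsigma$ --- exactly as in the opening lines of the proof of Lemma~\ref{analytic_conmutator} --- one has
$$[a,b]_\hbar(z) = 2i\int_{\R^{4d}}\widehat a(w^*)\,\widehat b(z^*-w^*)\,\sin\big(\tfrac{\hbar}{2}\varsigma(w^*,z^*-w^*)\big)\,\frac{e^{iz^*\cdot z}}{(2\pi)^{4d}}\,dw^*\,dz^*.$$
On the other hand, taking the Fourier transform of the product rule defining the Poisson bracket and using that $\varsigma$ is bilinear and antisymmetric, a direct computation gives
$$\{a,b\}(z) = -\int_{\R^{4d}}\widehat a(w^*)\,\widehat b(z^*-w^*)\,\varsigma(w^*,z^*-w^*)\,\frac{e^{iz^*\cdot z}}{(2\pi)^{4d}}\,dw^*\,dz^*.$$
Hence $\frac{i}{\hbar}[a,b]_\hbar-\{a,b\}$ is again an integral of the same shape, now with the factor
$$K_\hbar := -\Big(\tfrac{2}{\hbar}\sin\big(\tfrac{\hbar}{2}\varsigma(w^*,z^*-w^*)\big) - \varsigma(w^*,z^*-w^*)\Big),$$
so that its Fourier transform at $z^*$ is the convolution $\tfrac{1}{(2\pi)^{2d}}\int K_\hbar\,\widehat a(w^*)\,\widehat b(z^*-w^*)\,dw^*$.

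The heart of the matter is that $K_\hbar$ is of size $\hbar^2$ and comes with three extra powers of the frequency variables. Indeed, the cancellation of the linear term --- which is precisely the Poisson bracket, a manifestation of working with the \emph{Weyl} quantization --- together with the elementary inequality $|\sin x - x|\le\tfrac{|x|^3}{6}$ gives $|K_\hbar|\le\tfrac{\hbar^2}{24}\,|\varsigma(w^*,z^*-w^*)|^3$, and then the symplectic bound~\eqref{e:estimate_simplectic_product} yields $|K_\hbar|\le\tfrac{\hbar^2}{3}\,|w^*|^3\,|z^*-w^*|^3$. Plugging this into the definition of $\|\cdot\|_{s-\sigma}$, splitting $e^{(s-\sigma)|z^*|}\le e^{(s-\sigma)|w^*|}\,e^{(s-\sigma)|z^*-w^*|}$ and reducing to a product of two one‑variable integrals via the substitution $u^*=z^*-w^*$ --- exactly the manipulation already carried out in Lemmas~\ref{l:multiplicative_property} and~\ref{analytic_conmutator} --- one is left to absorb the polynomial weights $|w^*|^3$ and $|u^*|^3$ into the exponential weights, which is done using $\sup_{r\ge 0}r^3 e^{-\epsilon r}=\big(\tfrac{3}{e\epsilon}\big)^3<\infty$ for any $\epsilon>0$. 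The regularity margin available because $a\in\mathcal{A}_s$ while the left‑hand side is only measured in $\|\cdot\|_{s-\sigma}$ is more than enough for this, and one arrives at an estimate of the announced form $\|\tfrac{i}{\hbar}[a,b]_\hbar-\{a,b\}\|_{s-\sigma}\le C_\sigma\,\hbar^2\,\|a\|_s\,\|b\|_{s-\sigma}$, with $C_\sigma$ depending only on $\sigma$.

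I do not anticipate any real obstacle: apart from writing down the two integral representations, this is the same weighted‑convolution bookkeeping as in Lemmas~\ref{l:multiplicative_property} and~\ref{analytic_conmutator}. The one point requiring a little care is extracting the correct power $\hbar^2$ (neither $\hbar$ nor $\hbar^3$): this relies on the exact cancellation $[a,b]_\hbar = -i\hbar\{a,b\}+O(\hbar^3)$ of the order‑$\hbar$ term, which is what the kernel $K_\hbar$ encodes and which is a feature of the Weyl calculus. Once this is in place, checking that $\frac{i}{\hbar}[a,b]_\hbar-\{a,b\}$ indeed belongs to $\mathcal{A}_{s-\sigma}$ with the stated norm bound is routine.
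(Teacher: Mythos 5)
Your proposal is correct and follows essentially the same route as the paper's proof: you subtract the two oscillatory-integral representations, use $|\sin x - x|\leq |x|^3/6$ together with the symplectic bound~\eqref{e:estimate_simplectic_product} to produce the factor $\hbar^2\,|w^*|^3|z^*-w^*|^3$, and absorb the cubic weights into the exponential loss from $s$ to $s-\sigma$, exactly as in Lemmas~\ref{l:multiplicative_property} and~\ref{analytic_conmutator}. The only (shared, harmless) bookkeeping subtlety is that absorbing $|z^*-w^*|^3$ requires a sliver of the margin $\sigma$ to be spent on the $b$-variable as well, so strictly one lands on $\Vert b\Vert_{s-\sigma'}$ for some $0<\sigma'<\sigma$ rather than $\Vert b\Vert_{s-\sigma}$; this is immaterial for every use of the lemma in the paper.
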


\begin{proof}
First write:
\begin{align*}
\hspace*{0.1cm} [a,b]_\hbar(z) + i \hbar \{ a, b \}(z) & \\[0.2cm]
 & \hspace*{-3.4cm} = 2i \int_{\R^{4d}}  \widehat{a}(w^*)\widehat{b}(z^*-w^*) \left( \sin \left( \frac{\hbar}{2} \, \varsigma(w^*,z^*-w^*) \right) - \frac{\hbar}{2} \, \varsigma(w^*,z^*-w^*) \right) \frac{e^{i z^* \cdot z}}{(2\pi)^{4d}} dw^* \, dz^*.
\end{align*}
Using~\eqref{e:estimate_simplectic_product} and $\sin(x) = x - \frac{x^2}{2} \int_0^1 \sin(tx) (1-t) dt$,
we obtain
\begin{align*}
\Vert [a,b]_\hbar + i \hbar \{ a, b \} \Vert_{s-\sigma} &  \\[0.2cm]
 & \hspace*{-2cm} \leq \frac{\hbar^3}{(2\pi)^{4d}} \int_{\R^{4d}} \vert \widehat{a}(w^*)\vert \vert w^* \vert^3 \vert \widehat{b}(z^*-w^*) \vert \vert z^* - w^* \vert^3 e^{(s-\sigma)( \vert z^* - w^* \vert + \vert w^* \vert)}  dw^* \, dz^* \\[0.2cm]
 &  \hspace*{-2cm}  \leq C_\sigma \, \hbar^3 \Vert a \Vert_s \Vert b \Vert_{s-\sigma}.
\end{align*}

\end{proof}

\subsection{Proof of the analytic Egorov Lemma}
We are now in position to prove Lemma~\ref{analytic_egorov}. Let us start with points~$(1)$ and~$(2)$.
By definition (\ref{analytic_conjugation}), we have
$$
\Vert  \Psi_t^{G,\hbar} a - a \Vert_{s-\sigma} \leq 
\sum_{j=1}^\infty \frac{1}{j!} \left( \frac{\vert t \vert}{\hbar} \right)^j \Vert \Ad_G^{\sharp_\hbar,j}(a) \Vert_{s - \sigma}.
$$
Using Lemma \ref{analytic_conmutator}, we also find that, for every $j \geq 1$,
\begin{align*}
\Vert \Ad_G^{\sharp_\hbar,j}(a) \Vert_{s - \sigma} & \leq \frac{2\hbar j}{e^2\sigma^2} \Vert \Ad_G^{\sharp_\hbar,j-1}(a) 
\Vert_{s - \frac{(j-1)\sigma}{j}} \Vert G \Vert_s \\[0.2cm]
 & \leq \frac{2^2\hbar^2 j^3}{e^4 \sigma^4(j-1) } \Vert  \Ad_G^{\sharp_\hbar,j-2}(a) \Vert_{s - \frac{(j-2)\sigma}{j}} \Vert G \Vert_s^2 \\[0.2cm]
 & \leq \cdots \leq \frac{2^j\hbar^j j^{2j}}{e^{2j} \sigma^{2j}j!} \Vert a \Vert_s \Vert G \Vert_s^j.
\end{align*} 
Then, using Stirling formula and as $\frac{2\vert t \vert \Vert G \Vert_{s} }{ \sigma^2}  < 1$, one gets
\begin{equation}\label{e:leading-term-analytic-egorov}
\Vert  \Psi_t^{G,\hbar} a - a \Vert_{s-\sigma} \leq  
\sum_{j=1}^\infty \frac{ j^{2j}\vert t \vert^j \Vert G \Vert_s^j}{(j!)^2(e \sigma)^{2j}}  \Vert a \Vert_s \leq  C_{\sigma}|t| \|G\|_s  \Vert a \Vert_s,
\end{equation}
for some constant $C_{\sigma}>0$ depending only on $\sigma$. In order to prove point~(3), we now write
\begin{eqnarray*}\Vert \Psi_t^{G,\hbar} a - a+t\{G,a\}\Vert_{s-\sigma} &\leq & 
|t|\left\Vert \frac{i }{\hbar}[G,a]_{\hbar}-\{G,a\}\right\Vert_{s-\sigma} \\
&+&\sum_{j=2}^\infty \frac{1}{j!} \left( \frac{\vert t \vert}{\hbar} \right)^j \Vert \Ad_{G}^{\sharp_\hbar,j}(a) \Vert_{s - \sigma}.
\end{eqnarray*}
We can now reproduce the above argument and combining this bound to Lemma~\ref{moyal_minus_poisson}, we can deduce point (3) of Lemma~\ref{analytic_egorov}.

\bibliographystyle{plain}
\bibliography{Referencias}
\end{document}